\author{Natalie M. Paquette}
\address{Institute for Advanced Study, School of Natural Sciences, Princeton NJ, 08540, USA}
\address{Department of Physics, University of Washington, Seattle, WA, 98195-1560, USA}
\email{npaquett@uw.edu}
\author{Brian R. Williams}
\address{School of Mathematics, University of Edinburgh, Edinburgh, UK}
\email{brian.williams@ed.ac.uk}
\date{\today}
\title{Koszul duality in quantum field theory}
\def\id{\mathbb{1}}
\begin{document}
\maketitle

\begin{abstract}
In this article, we introduce basic aspects of the algebraic notion of Koszul duality for a physics audience. We then review its appearance in the physical problem of coupling QFTs to topological line defects, and illustrate the concept with some examples drawn from twists of various simple supersymmetric theories. Though much of the content of this article is well-known to experts, the presentation and examples have not, to our knowledge, appeared in the literature before. Our aim is to provide an elementary introduction for those interested in the appearance of Koszul duality in supersymmetric gauge theories with line defects and, ultimately, its generalizations to higher-dimensional defects and twisted holography. 
\end{abstract}

\section{Introduction}

The connections between string theory and mathematics are as strong as ever.  As history has shown time and again, a deep understanding of theoretical physics often requires (and sometimes begets!) beautiful mathematics. In recent years, the subject of higher algebras has been drawn into this web of relations between the two disciplines. A ubiquitous duality in mathematics called Koszul duality is part of the homological algebraic vanguard emerging in theoretical physics in recent years. We will provide a brief, example-oriented introduction to Koszul duality in \S \ref{s:review}. 

The appearance of Koszul duality parallels a recent surge of interest in the study of extended operators, defects, and boundary conditions in quantum field theories and, to be sure, this is no coincidence. 
As we describe in this note, Koszul duality captures, from an operator algebraic perspective, essential symmetries of coupled bulk/defect systems compatible with gauge invariance. 

Let us give a heuristic for the context in which Koszul duality appears in a broader field theoretic context. 
Along the way, we will preview the remainder of this article. 
Suppose that $\cA$ is the algebra of local operators corresponding to some QFT. 
To obtain this algebra structure we assume that our QFT is defined on a manifold of the form $\RR \times M$. 
The algebra structure arises from the OPE in the $\RR$-direction.  

Next, assume that $\cB$ is another algebra, which we think of as being associated to some quantum mechanical system along $\RR \times \{x\} \subset \RR \times M$ where $x \in M$. 
For now, and for the most of this work, we assume that $\cA,\cB$ are simply associative (possibly up to homotopy) algebras. 
We will always keep track of an internal grading by ghost degree and we denote by $\delta$ the BRST operator. 

The natural problem we contemplate is what it means to couple the two systems $\cA$ and $\cB$. 
The tensor product algebra $\cA \otimes \cB$ should be thought of as the algebra of operators of the total, uncoupled system, along $\RR$. 
By a {\em coupling} of the two systems we will mean a {\em Maurer--Cartan element} in this algebra
\[
\alpha \in \cA \otimes \cB .
\]
That is, $\alpha$ is an element of ghost degree one which satisfies the Maurer--Cartan equation
\[
\delta \alpha + \alpha \star \alpha = 0 .
\]
In \S \ref{s:algcouple}, we will elaborate further on this definition. 
In \S \ref{s:lagcoupling}, we will explain the relationship of this algebraic notion of a coupling to the more standard description in terms of local Lagrangians.

We arrive at Koszul duality from this description of a coupling between two systems. 
The key observation is that the data of the Mauer--Cartan element $\alpha$ is equivalent to the data of a map of algebras
\[
\alpha \colon \cA^! \to \cB 
\]
where $\cA^!$ is Koszul dual to the algebra $\cA$. 
From this observation, the Koszul dual has the following interpretation: $\cA^!$ is the algebra of operators on the {\em universal} line defect supported on $\RR \times \{x\}$. We will explain this point of view in detail in \S \ref{s:lines} and illustrate it concretely in a free field theory example in \S \ref{s:free}. 

We stress that Koszul dual algebras at the classical level may be (dually) deformed when considering quantum effects. The Maurer--Cartan equation is linear at the classical level and simply captures BRST-closure of the coupling: $\delta \alpha = 0$. However, a classical coupling may fail to be BRST invariant at the quantum level. As we will explain, this manifests as a deformation of the BRST operator $\delta \rightarrow \delta + \alpha \star (-)$, which in turn deforms the Maurer--Cartan equation by the quadratic term as above. Requiring that the coupling satisfies the full, nonlinear Maurer--Cartan equation will produce a noncommutative deformation of the classical operator product. We will illustrate this in an example in \S \ref{s:crit}. For another illustrative example of this phenomenon, in which the algebra $\mathfrak{g}[[z]]$ deforms to the Yangian due to the presence of a two-loop gauge anomaly, see \cite{CWY}. In general, quantum corrections may take the form of an infinite series, with nontrivial corrections at every loop order. 

Finally, in \S \ref{s:conclusions}, we will take a somewhat broader view. We will explain some connections between Koszul duality and other areas of inquiry, mention other appearances and applications of Koszul duality in the field theory literature, and highlight further examples of our perspective on Koszul duality, as well as some generalizations, in both field theory and holography. 

\subsection*{Notations \& Conventions}

We briefly summarize some conventions we use throughout this note. Unless otherwise stated, we study quantum field theories in Euclidean signature and on flat spacetimes. 

In particular, all of our considerations are local, and are carried out at the perturbative level. We do not account for interesting global properties of quantum field theories, such as studying the theories on manifolds of nontrivial topology, nor do we account for potentially interesting nonperturbative effects, except for brief comments in \S \ref{s:conclusions}. 
Clarifying any interactions between Koszul duality and these effects is a topic we leave for future work. 

By a grading we will mean a $\ZZ$-grading unless otherwise specified. 
Throughout this note we use the {\em cohomological} grading convention.
A cochain complex is a graded vector space $A$ equipped with a sequence of operators $\d_i \colon A^i \to A^{i+1}$ which satisfies $\d_{i+1} \circ \d_i = 0$. 
Given a graded vector space $A$ and an integer $n$, denote by $A[n]$ the new graded vector space which satisfies $A[n]^i := A^{n+i}$. 
In other words, $A[n]$ is like $A$ where the grading is shifted down by the integer $n$.

We will use the Koszul rule of signs for all graded algebraic structures.
For example, a graded commutative algebra is a graded vector space equipped with a degree zero product $\cdot$ which satisfies $a \cdot b = (-1)^{|a||b|} b \cdot a$. 
The bracket in a graded Lie algebra follows the Koszul sign rule $[a, b] = (-1)^{|a||b|+1} [b, a]$.

\section*{acknowledgements} 

There is a rich history of how ideas of Koszul duality in QFT developed and we are undoubtedly unaware of many important contributions. 
We learned much of this in conversations with Kevin Costello and with fellow collaborators.
We are especially grateful to Kevin for teaching us many of the ideas and perspectives described in this note, for enjoyable and inspiring collaborations, and for emphasizing to us and others the importance of Koszul duality in mathematical physics. 
We also wish to thank Tudor Dimofte, Chris Elliott, Davide Gaiotto, Owen Gwilliam, Justin Hilburn, Si Li, Ingmar Saberi, and Philsang Yoo for numerous useful conversations and collaborations throughout the years. 
Finally, we are grateful to K.~Costello, R.~Grady, O.~Gwilliam, A.~Khan, and M.~Szczesny for very helpful comments on a draft of this note. 

NMP was supported by the grant NSF PHY-1911298, and the Sivian Fund, and is currently supported by the DOE's Research Opportunities in High Energy Physics and the University of Washington. 
BRW thanks the University of Edinburgh for their support.

\section{A short review of Koszul duality} \label{s:review}

We provide a basic introduction to Koszul duality for associative algebras, emphasizing the most simple algebraic examples. Since our aim is expository, we will omit many important references and generalizations. 
For a more exhaustive textbook introduction to Koszul duality in mathematics see, for example, \cite[Chapter 3]{LV}. 

\subsection{Some definitions}

Let $\cA$ be an associative algebra.
We will often work in the more general situation where $\cA$ is a differentially graded (dg) algebra, but let us focus on this simple case first. 
Physically, $\cA$ represents the algebra of operators of a quantum mechanical system. 


An $\cA$-module $M$ also has a quantum mechanical interpretation. 
Physically, if $\cA$ represents the algebra of local operators of a quantum mechanical system, then an $\cA$-module $M$ labels a point defect at a particular point in time. 
If the point defect is contained in the interior of the one-dimensional manifold then one needs a bi-module structure on $M$. 
 
A map of algebras $f \colon \cA \to \cB$ is a map of cochain complexes which also preserves the associative products.
Notice that $f$ endows $\cB$ with the structure of an $\cA$-module by the formula $a \cdot b \define f(a) \cdot b$, for $a \in \cA$ and $b \in \cB$, where on the right hand side we use the associative product on $\cB$. 

Similarly if $M,N$ are $\cA$-modules one can define a map of modules $f \colon M \to N$ to be a linear map which intertwines the $\cA$-module structure. 
All $\cA$-modules combine to form a linear category, meaning the morphism sets are endowed with the structure of a vector space over $\CC$.
We will be interested in an enhancement of this to a {\em dg category}, which means its set of morphisms is equipped with the structure of a cochain complex (so, the space of Homs between any two fixed modules is graded and equipped with a differential). 

For $\cA$-modules $M,N$ we denote by $\RHom_\cA(M,N)$ the dg vector space of homomorphisms. 
We will not use the full technology of derived categories in this note, but we remark on some important features to give the reader some intuition. 
The $i$th cohomology of the complex $\RHom_{\cA} (M,N)$ comprises the Ext groups ${\rm Ext}^i_{\cA}(M,N)$. 
For all $M,N$ the group ${\rm Ext}^0_{\cA}(M,N)$ agrees with the naive space of $\cA$-linear homomorphisms $M \to N$. 
The space ${\rm Ext}^1_{\cA}(M,N)$ is also relatively easy to understand: it is the space of equivalence classes\footnote{Two extensions $E,E'$ are equivalent if there is an isomorphism of $\cA$-modules $E \to E'$ which can be extended to an isomorphism of short exact sequences.} of short exact sequences of $\cA$-modules of the form
\[
0 \to N \to E \to M \to 0 .
\]

When $M = N$ the dg vector space (or cochain complex) $\RHom_\cA(M,M)$ acquires the structure of a differentially graded (dg) algebra.
A dg algebra is $\ZZ$-graded algebra equipped with a square-zero, degree one operator $\d \colon \cA \to \cA[1]$ such that 
\[
\d (a \cdot b) = (\d a) \cdot b + (-1)^{|a|} a \cdot (\d b) 
\]
where $|a|$ denotes the degree of the element $a$. 
The zeroth cohomology of $\RHom_\cA(M,M)$ agrees with the algebra of $\cA$-linear endomorphisms ${\rm End}_{\cA}(M)$.  
In general, to compute the space of derived homomorphisms one needs to first \textit{freely resolve} the module $M$, and we will see an example shortly.

A particularly important structure on an algebra will be that of an augmentation. 
An {\em augmentation} on an algebra $\cA$ is a map of algebras 
\[
\ep \colon \cA \to \CC 
\]
meaning that $\ep (a b) = \ep(a) \ep(b)$ and $\ep (\d a) = 0$. This augmentation may have a nontrivial kernel, which acts trivially on $\CC$. 
In other words, an augmentation defines the data of an $\cA$-module structure on the one-dimensional module. 
A choice of augmentation corresponds physically to a choice of vacuum, see \S \ref{sec:univ}. 
In most examples we encounter in this note, the augmentation is essentially unique; we plan to expand on explicit examples with vacuum degeneracy in subsequent work.
When we want to stress this interpretation of an augmentation, we denote the one-dimensional module by~$\CC_\ep$.

The Koszul dual of the algebra $\cA$ is the dg algebra
\[
\cA^! \define \RHom_\cA(\CC_\ep,\CC_\ep) .
\] 
Notice that the Koszul dual of $\cA$ is defined with respect to an augmentation $\ep$, though when the augmentation is understood we will omit it from the notation. Roughly speaking, this denotes that the Koszul dual algebra is the space of symmetries of the trivial module that commute with $\cA$. Further, notice that even if a dg-algebra $\cA$ is an ordinary associative algebra with trivial differential, its Koszul dual will be a complex with support in cohomological degrees $\geq 1$.

This construction of the Koszul dual makes sense in the greater generality where $\cA$ itself is a dg algebra. 
In physical contexts, where we think about $\cA$ as being the algebra of operators of a quantum mechanical system, the operator $\d$ should be thought of as a BRST operator or a square-zero supercharge. 
The grading will be by ghost number or fermion number, respectively \footnote{In many physical and mathematical contexts, we only require a $\ZZ/2\ZZ$ grading, e.g. fermion parity, but we will restrict to the $\ZZ$-graded case for now.}.
If we forget the product then $(\cA, \d)$ has the structure of a cochain complex. 
We work over the complex numbers $\CC$ so all products and maps are required to be $\CC$-linear. 

Concretely, there is a model for the Koszul dual $\cA^!$ called the {\em bar complex}.
See \cite{Weibel}, for a textbook account. 
We won't use it explicitly here, but we will compute some basic examples in a low-brow fashion momentarily. 

There is a canonical map of dg algebras $\cA \to (\cA^!)^!$. 
In nice situations\footnote{For instance, if $H^0 (\cA) \cong \CC$, $H^{<0} (\cA) = 0$, and $H^i(\cA)$ is finite-dimensional for all $i > 0$.} 
this map is a quasi-isomorphism, meaning it is an isomorphism on cohomology. 
Thus, with the choice of augmentation, $\CC_\ep$ is simultaneously a module for $\cA, \cA^!$, and these algebras have commuting actions on $\CC_\ep$.

Throughout this note, we will focus on the computation of Koszul dual algebras. 
However, we note that modules for these algebras are also of physical and mathematical interest. 
One of the key characterizing features of Koszul duality is that certain subcategories of the derived categories of dg-modules of $\cA$ and $\cA^!$ are equivalent \cite{BGG, BGSch, BGSoerg, Pos}. \footnote{Certain assumptions on the dg algebra $\cA$ are required for this result to hold. 
To get an idea of the technical homological algebra that goes into this, even for examples of physical interest, we refer to \cite[Section 8]{CYangian}. } 
We will discuss modules briefly in \S \ref{s:conclusions}, but it will not be our main focus. 

\subsection{Free field example} 

To illustrate the abstract considerations above, we present a short computation of the most famous instance of Koszul duality. 
This is the case where the algebra is the free symmetric algebra on a single (bosonic) generator, $\cA = \CC[x]$. 
The augmentation $\ep$ is the obvious one which sends a polynomial to its constant coefficient. 

The resulting one-dimensional module $\CC_{\ep}$ is not free (i.e. it does not admit a presentation in terms of linearly independent basis elements) as a $\CC[x]$-module, so, in order to make efficient computations, we need to find a cochain complex of free modules which resolves it. 
The prototypical example of a free module is $\CC[x]$ itself.

Using the technique of resolving a module as cochain complex is not unheard-of in physics: indeed, BRST quantization for some gauge group $G$ involves making a so-called projective resolution of the space of $G$-invariant local operators. 
The corresponding complex introduces fields of nonzero ghost number and a BRST differential. 
The cohomology of the BRST complex reproduces $G$-invariant local operators concentrated in degree 0 (the physical observables), and cohomologies in higher ghost number vanish.\footnote{In mathematical language, the BRST complex is a model for derived invariants, which may differ from strict invariants, but in physics one is often interested in the strict case.}

Similarly, the Batalin--Vilkovisky (BV) formalism provides a derived enhancement of the critical locus, i.e. solutions of the Euler-Lagrange equations, of a Lagrangian quantum field theory using antifields and antighosts. 
This is a geometric avatar of the Koszul--Tate complex. 

In general, a free resolution of some $R$-module $M$ is an exact sequence of $R$-modules of the form
\[
\cdots \to M_2 \to M_1 \to M_0 \to M \to 0 .
\]
Said differently, $(\cdots M_2 \to M_1 \to M_0)$ has the structure of a cochain complex with $M_i$ in degree $-i$. 
The exact sequence endows a cochain map of $R$-modules from this cochain complex to the original module $M$ which is an isomorphism in cohomology.
Such a cochain map is called a quasi-isomorphism. 
Quasi-isomorphism plays nicely with Koszul duality in the following sense. 
If $\cA_1, \cA_2$ are quasi-isomorphic algebras, via a quasi-isomorphism which preserves augmentations, then $\cA_1^{!}, \cA_2^{!}$ are quasi-isomorphic as well. 

Different quasi-isomorphic representations of the same module can have useful physical interpretations, even though the physical observables of interest involve the passage back to cohomology. 
To take a particularly simple instance, Yang-Mills theory can be quantized using the minimal BRST complex or using the full machinery of the BV-BRST formalism, but we do not need to use the latter for most purposes.

A natural choice for a free resolution in this example is the so-called ``Koszul complex'', a progenitor of the Koszul-Tate complex in BV-quantization, defined as the polynomial algebra on an bosonic generator $x$ and a fermionic generator $\xi$. 
There is a nontrivial differential defined by $\d \xi = x$. 
A presentation for this dg algebra is 
\beqn\label{eqn:res1}
\bigg(\CC[x, \xi] \, , \, \d = x \frac{\partial}{\partial \xi} \bigg).
\eeqn
On linear generators, the cochain complex is of the form
\[
\begin{tikzcd} 
1 & x & x^2 & x^3 & \cdots \\
& \xi \ar[u] & x \xi \ar[u] & x^2 \xi \ar[u] & \cdots 
\end{tikzcd}
\]
where the bottom line is in cohomological degree $-1$, the top line is in cohomological degree zero, and the arrows denote acting by the differential as usual. 

The derived space of homomorphisms of $\CC_\ep$ is equal to linear endomorphisms 
\[
\CC[x, \xi] \to \CC[x,\xi]
\] 
which commute with the $\CC[x]$-module structure: that is, endomorphisms which commute with multiplication by $x$\footnote{The $\CC[x]$-module structure is by multiplication, so this is what it means to be a map of $\CC[x]$-modules.}. 
Such endomorphisms are algebraically generated by multiplication by $x$, multiplication by $\xi$, and differentiation by $\xi$. 
Thus 
\beqn\label{eqn:rhom1}
\RHom_{\CC[x]} \left(\CC_\ep, \CC_\ep\right) = \left(\CC\left[x, \xi, \frac{\partial}{\partial \xi} \right] \, , \, \d = \left[x \frac{\partial}{\partial \xi} , -\right] \right) .
\eeqn
The differential is given by taking the commutator with the original differential of the Koszul complex. 

Now, let $\eta$ denote an odd parameter and consider the cochain complex equipped with the trivial differential
\[
 \left(\CC[\eta] \, , \, \d = 0 \right) .
\]
Since $\eta^2=0$, this is simply the graded vector space linearly spanned by a single element in even degree, $1$, and a single element in odd degree, $\eta$. 

Consider the following map of dg algebras
\[
\left(\CC[\eta] \, , \, \d = 0 \right) \hookrightarrow \RHom_{\CC[x]} \left(\CC_\ep, \CC_\ep\right) 
\]
which sends $1 \mapsto 1$ and $\eta \mapsto \frac{\partial}{\partial \xi}$. 
This is a map of cochain complexes since $\left[x \frac{\partial}{\partial \xi} , \frac{\partial}{\partial \xi}\right] = 0$.

Further, this map induces an isomorphism on cohomology. 
To see this, let us compute the cohomology of our model in \eqref{eqn:rhom1}. 
Notice that the algebra of closed elements, those killed by the differential, is freely generated by $x, \frac{\partial}{\partial \xi}$. 
Furthermore, if $n \geq 0$ and $k = 0,1$ the element $x^{n+1}\left(\frac{\partial}{\partial \xi}\right)^k$ is exact since 
\[
\left[x \frac{\partial}{\partial \xi} , x^n \xi \left(\frac{\partial}{\partial \xi}\right)^k \right] = x^{n+1} \left(\frac{\partial}{\partial \xi}\right)^k . \\
\]
This shows that the cohomology is algebraically generated by $\frac{\partial}{\partial \xi}$. 
The result follows via the identification $\eta = \frac{\partial}{\partial \xi}$. 

In other words, we see that the Koszul dual of the polynomial algebra on a bosonic generator $x$ is equivalent to the polynomial algebra on a fermionic generator $\eta = \frac{\partial}{\partial \xi}$
\[
\CC[x]^! \simeq \CC[\eta] .
\]

In a completely analogous way, one proves that the Koszul dual of the algebra of polynomials in $n$ bosonic variables $\{x_1,\ldots, x_n\}$, which form a basis for a vector space $V$, is the algebra of polynomials in $n$ fermionic variables $\{\eta^1,\ldots,\eta^n\}$. 
In other words, there is an equivalence
\[
{\rm S}^\bu (V)^! \simeq \wedge^\bu V^\vee .
\]
The linear dual on the right-hand side $V^{\vee}$ arose in the computation above; the fermionic generator $\eta^i = \frac{\partial}{\partial \xi_i}$ is naturally dual to the generator $\xi_i$. 

Instead of working with just even or odd variables in the above calculation, we can introduce an integer cohomological grading. 
Declare the original vector space $V$ to sit in cohomological degree zero. 
Then on the dual side, $V^\vee$ must sit in cohomological degree $+1$.
In other words, if we start with a polynomial algebra on variables $x_1,\ldots,x_n$ of degree zero then the dual is the free graded algebra on generators $\eta^1,\ldots,\eta^n$ of degree~$+1$.

\subsection{An example from gauge theory}

Let us compute another simple, famous example of a Koszul dual pair.
Let $\fg$ be a Lie algebra with basis $\{x_1,\ldots, x_n\}$ and structure constants $\{f_{ij}^k\}$. 
The enveloping algebra of $\fg$, denoted $U \fg$, is the algebra generated by the bosonic generators $x_1,\ldots,x_n$ subject to the relation 
\beqn \label{eqn:relation}
x_i x_j - x_j x_i = f_{ij}^k x_k .
\eeqn
There is a natural augmentation on this algebra $\ep \colon U \fg \to \CC$ which sends the linear generators to zero $x_i \mapsto 0$ and the constant $1 \in U \fg$ to $1 \in \CC$. 
This equips us with the $U \fg$-module structure on $\CC_\ep$ where $x_i$ acts by zero and $1$ acts by the identity. 

If we forget the algebra relation, the computation of the Koszul dual of $U \fg$ is identical to the calculation with the polynomial algebra above. 
This follows from the Poincar\'e--Birkoff--Witt theorem: there is an isomorphism of $U\fg$ modules $U \fg \cong \Sym (\fg)$. 
In other words, we can write a presentation for our algebra as
\[
U \fg = \CC[x_i] / \sim 
\]
with the equivalence relation generated by \eqref{eqn:relation}. 

The Koszul complex freely resolving $\CC$ as a $U \fg$-module also takes into account the structure constants of the Lie algebra. 
It is defined by
\[
\left(\CC[x_i,\xi_i] / \sim \, , \,\d =  x_i \frac{\partial}{\partial \xi_i} \right) 
\]
where we in addition to the relation \eqref{eqn:relation} we also impose the relation
\beqn \label{eqn:relation2}
x_i \xi_j - \xi_j x_i = f_{ij}^k \xi_k .
\eeqn
It is easy to see that the differential is compatible with the relations.

The space of derived homomorphisms of the module $\CC_\ep$ is computed in a similar way as in the abelian example. 
It is given by the following complex 
\beqn
\label{eqn:big}
\left(\CC\left[x_i,\xi_i, \frac{\partial}{\partial \xi_i}\right] / \sim \, , \, \d = \left[x_i \frac{\partial}{\partial \xi_i} , - \right] \right) 
\eeqn
where, in addition to \eqref{eqn:relation} and \eqref{eqn:relation2} we have introduced the third relation
\beqn \label{eqn:relation3}
x_i \frac{\partial}{\partial \xi_j} - \frac{\partial}{\partial \xi_j} x_i = f_{ij}^k \frac{\partial}{\partial \xi_i} .
\eeqn

This is a big cochain complex, but just as in the abelian case we can find a smaller model for it.
Consider the free graded algebra generated by $n$ fermionic variables $\eta^1, \ldots, \eta^n$ equipped with the following differential
\beqn\label{eqn:ce1}
\left(\CC[\eta^i] \, , \, \d =  f_{ij}^k \eta^i \eta^j \frac{\partial}{\partial \eta^k}  \right) .
\eeqn
Notice that we have not imposed any relations: this is a {\em commutative} dg algebra.

Consider the map of graded algebras
\beqn\label{eqn:quasi1}
\Phi \colon \CC[\eta^i] \to \CC\left[x_i,\xi_i, \frac{\partial}{\partial \xi_i}\right]
\eeqn
which sends $\eta^i \mapsto  \frac{\partial}{\partial \xi_i}$. 
We want to show two things: first that this map is compatible with the differentials in \eqref{eqn:big} and \eqref{eqn:ce1}, and second that it induces an isomorphism on cohomology.

To see that the map is compatible with the differentials, we observe that for each $j$ we have the commutator
\[
\left[x_i \frac{\partial}{\partial \xi_i} , \frac{\partial}{\partial \xi_j}\right] = f^j_{ik} \frac{\partial}{\partial \xi_i} \frac{\partial}{\partial \xi_k} .
\]
which is equivalent to the equation $\d \Phi(\eta^j) =\Phi (\d \eta^j)$. 
We point out that this is the key step that is different in the non-abelian case. 

To see that $\Phi$ is an isomorphism on cohomology we note the following. 
First, the algebra of closed elements is generated by $x_i$ and $\frac{\partial}{\partial \xi_i}$ subject to relations \eqref{eqn:relation} and \eqref{eqn:relation3}.
Next, just as in the abelian case we can argue that any element where $x_i$ appears is exact, hence cohomologically trivial. 
\footnote{This argument can be made rigorous by the use of a natural filtration, but we omit the technical details here.}  

Mathematically, the cochain complex \eqref{eqn:ce1} is called the {\em Chevalley--Eilenberg} cochain complex and is denoted $\clie^\bu(\fg)$. 
This complex computes so-called Lie algebra cohomology, which controls things like central extensions and anomalies in the context of QFT. 

Physically, we will think of the commutative dg algebra $\clie^\bu(\fg)$ as the BRST algebra encoding the local operators of a ghost field on flat space. 
In more familiar notation, the algebra is generated by the BRST ghosts 
\[
\eta^i = \fc^{i}
\]
of ghost number 1 endowed with a free anticommuting product.
The familiar BRST differential is precisely $\d = f_{ij}^k \fc^i \fc^j \frac{\partial}{\partial \fc^k}$.

Although we think about this example as arising from an underlying gauge theory, we work only at the level of the Lie algebra, and ignore subtleties related to the global form of the gauge group. 

In conclusion, we have seen that the Koszul dual of the associative algebra $U \fg$ is the commutative dg algebra of Chevalley--Eilenberg cochains $\clie^\bu(\fg)$
\[
(U \fg)^! \simeq \clie^\bu (\fg) 
\]
which is precisely the algebra of local operators of a ghost field valued in $\fg$. 

\subsection{A non-free example} 
\label{sec:LG1}

Before we move on to the explanation of why Koszul duality appears in physics, we compute the Koszul dual of one more algebra. The algebra in question naturally arises in the context of interacting field theories, namely as the algebra of local operators of the B-twist of a $\cN = (2,2)$ Landau-Ginzburg theory with target a vector space equipped with a superpotential.

Let $V$ be an $n$-dimensional vector space and consider the algebra 
\[
\CC[x_i, \psi^j] = \Sym^\bu (V^\vee) \otimes \wedge^\bu(V) 
\]
where $\{\psi^j\}$ is a linear basis for $V$ and $\{x_i\}$ is its dual basis.
The variable $x_i$ is even parity and $\psi^j$ is odd parity; we will furthermore take a $\ZZ$-grading where $x_i$ is degree zero and $\psi^j$ is degree $-1$. These operators have a physical origin as the $Q_B$-closed scalar and fermionic components of chiral multiplets valued in $V$.
Geometrically, this is the graded algebra of (polynomial) polyvector fields on $V$.

Consider a `superpotential' $W(x) = W(x_1,\ldots, x_n) \in \CC[x_1,\ldots, x_n]$ and the operator 
\[
(\partial_j W) \frac{\partial}{\partial \xi_j}
\]
acting on $\CC[x,\xi]$ by derivations. 
It is immediate to see that $\d_W \circ \d_W = 0$ and so we obtain the commutative dg algebra 
\[
\cA = \left(\CC[x_i, \xi^j] \, , \, \d = (\partial_j W) \frac{\partial}{\partial \xi_j} \right).
\]
Notice that in the case $W = x_1^2 + \cdots + x_n^2$ this complex is precisely the resolution we used in \eqref{eqn:res1}. 

The algebra of closed elements is generated by the variables $x_i$ in cohomological degree zero. 
Furthermore, any polynomial of the form $(\partial_i W) p(x)$ where $p$ is an arbitrary polynomial is exact for the differential. 
Thus, the cohomology of this complex is concentrated in degree zero, and is precisely the Jacobian ring of~$W$
\[
\CC[x_i] \, / \, (\partial_i W) .
\]

Of course, we have recovered the usual chiral ring of the B-twisted Landau-Ginzburg model. 

When $W = $ constant, the algebra $\cA$ has the following special property: it is self-Koszul dual. 
This means that $\cA^!$ is equivalent to the original algebra $\cA$:
\[
\cA^! = \left(\Sym^\bu (V^\vee) \otimes \wedge^\bu(V)\right)^! = \Sym^\bu (V^\vee)^! \otimes \wedge^\bu(V)^! \cong \wedge^\bu (V) \otimes \Sym^\bu (V^\vee) = \cA .
\]
Note that it is absolutely necessary to remember the full algebra $\cA$ for this statement to hold. 
For instance, the degree zero piece is simply the algebra of polynomials in the variables $x_i$, which is far from being self Koszul dual.

For more general $W$ the algebra $\cA$ no longer has this property. 
For instance, when $n=1$ and $W = x^3$, one has $\cA \simeq \CC[x]/(x^2)$. 
In this case the Koszul dual algebra is huge, given by the free (tensor) algebra on a single generator $x^!$: 
\[
\cA^! = {\rm Tens}(\CC) = \CC \oplus \CC x^! \oplus \CC (x^! \otimes x^!) \oplus \cdots .
\]

\section{Maurer--Cartan elements and algebraic couplings}
\label{s:algcouple}

In this subsection, we will explain how Koszul dual pairs of algebras naturally arise in the context of topological line defects: more precisely, when considering gauge-invariant couplings between (supersymetrically twisted) quantum field theories and topological quantum mechanical systems. We will largely expand upon the presentation in \cite{CP}; see also the appendix of \cite{GO}.

\subsection{Coupling two quantum mechanical systems} 

Before discussing the problem of coupling a line defect inside of a higher dimensional gauge theory we discuss the coupling of two one-dimensional theories whose algebras of local operators we denote by $\cA$ and~$\cB$. We will assume that the theories are topological, in the sense that their Hamiltonian/``time'' translation operators are cohomologically trivial with respect to their BRST operators $Q_\cA$ and $Q_\cB$; in this context, the algebras of local operators are (dg-)associative, since the operator product reduces to the associative multiplication. 

The completely `uncoupled' theory is the theory obtained by combining these two 
theories in a rather trivial way; at the level of local operators this is just the plain tensor product (over $\CC$):
\beqn\label{eqn:uncoupled}
\cA \otimes \cB .
\eeqn
The total BRST operator is simply $Q_\cA \otimes 1 + 1 \otimes Q_\cB$ \footnote{Although in this section we take a purely algebraic perspective, we remark that at the level of action functionals, which we will turn to later, this corresponds to simply taking the sum $S_{\cA} + S_{\cB}$.}.

We restrict to nontrivial couplings between theories $\cA$ and $\cB$ which are visible at the level of the BRST complex. 
This means that the coupled dg algebra of local operators is, as a graded algebra, exactly as in \eqref{eqn:uncoupled} (the plain tensor product over $\CC$). 
What is modified is the BRST operator: 
\beqn\label{eqn:coupled}
Q_{\cA} \otimes 1 + 1 \otimes Q_{\cB} + \til{Q}
\eeqn
where 
\[
\til{Q} \colon \cA \otimes \cB \to \cA \otimes \cB
\]
is some new operator contributing to the coupled BRST operator. 
In order for \eqref{eqn:coupled} to be of homogenous ghost number we see that $\til{Q}$ must also be of ghost number one.

We remark that the addition of such an operator $\til{Q}$ to the BRST differential is one way to formulate the notion of a coupling between theories with operator algebras $\cA$ and $\cB$. 
More generally, one can speak about deforming the tensor product $\cA \otimes \cB$ as a dg algebra (or even $A_\infty$ algebra), which is algebraically controlled by the Hochschild cohomology\footnote{The string theoretic reader may be familiar with taking Hochschild cohomology of a category of boundary conditions in passing from the open string to the closed string (as always, backgrounds of the closed string deform the open string algebra) \cite{DBranesmirror}. When the category of boundary conditions can be viewed as the category of modules over an associative algebra $A$, where $A$ is given by endomophisms of a generating object, we recover $HH(A, A)$. Koszul duality can be thought of as a transformation to another, dual generating object in such a category.}. 
Indeed, it is the second Hochschild cohomology $HH^2(A,A)$ of an algebra $A$ that controls infinitesimal (or first-order) $A_\infty$-deformations of the algebra $A$. 

We specialize the situation a bit further and make the following assumption: 
\begin{itemize}
\item the operator $\til{Q}$ is {\em inner} (or {\em Lagrangian}\footnote{It is precisely these elements which will correspond to Lagrangian density couplings via topological descent: see \S \ref{s:lagcoupling}.})
in the sense that it acts on an observable $\cO$ by
\[
\til{Q} \cO = [\alpha , \cO]_\star = \alpha \star \cO - \cO \star \alpha
\]
where $\alpha$ is an element in $\cA \otimes \cB$ and $\star$ is the product in the algebra $\cA \otimes \cB$. 
\end{itemize}
In order for this to make sense the element $\alpha$, like $\til{Q}$, must be of ghost number one. 

What other conditions must $\alpha$ satisfy so that the total BRST operator is well-defined? 
In the case of the purely uncoupled system, it is automatic that the BRST operator was square-zero $(Q_{\cA} \otimes 1 + 1 \otimes Q_{\cB})^2 = 0$ since both $Q_{\cA}$, $Q_{\cB}$ are and commute with one another.
For general $\alpha$ one has the following. 

\begin{lem}
The operator $Q_{\cA} + Q_{\cB} + [\alpha, -]_\star$ acting on $\cA \otimes \cB$ is square-zero if and only if $\alpha$ satisfies the following Maurer--Cartan equation 
\beqn\label{eqn:mc1}
Q_{\cA} (\alpha) + Q_{\cB}(\alpha) + \alpha \star \alpha = 0 .
\eeqn
\end{lem}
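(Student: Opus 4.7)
The plan is a direct computation: expand $(Q_{\cA} + Q_{\cB} + [\alpha,-]_\star)^2$ as an operator on $\cA \otimes \cB$ and collect the terms, using only that $Q_0 := Q_\cA \otimes 1 + 1 \otimes Q_\cB$ is a square-zero graded derivation of the $\star$-product and that $\alpha$ has ghost degree one.

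First I would use $Q_0^2 = 0$ and the fact that $Q_0$ is a (graded) derivation of $\star$ to rewrite the mixed terms. Applied to an observable $\cO$, the graded Leibniz rule gives
\[
Q_0 (\alpha \star \cO) = (Q_0 \alpha) \star \cO + (-1)^{|\alpha|} \alpha \star (Q_0 \cO),
\qquad Q_0(\cO \star \alpha) = (Q_0 \cO) \star \alpha + (-1)^{|\cO|}\cO \star Q_0 \alpha,
\]
and since $|\alpha| = 1$ these combine, after sorting the signs characteristic of a bracket on an odd element, into the clean identity
\[
Q_0 \bigl([\alpha,\cO]_\star\bigr) + [\alpha, Q_0 \cO]_\star = [Q_0 \alpha , \cO]_\star .
\]
So the cross terms $Q_0 \circ [\alpha,-]_\star + [\alpha,-]_\star \circ Q_0$ reduce to the single inner operator $[Q_0 \alpha, -]_\star$.

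Next I would handle the remaining square $[\alpha,-]_\star \circ [\alpha,-]_\star$. This is where the $\alpha \star \alpha$ piece of the Maurer--Cartan equation enters, and it is the one place where the odd-degree nature of $\alpha$ matters. Applying the graded Jacobi identity for the commutator $[-,-]_\star$ to the odd element $\alpha$ gives the standard relation
\[
[\alpha,[\alpha,\cO]_\star]_\star = \tfrac{1}{2}\bigl[[\alpha,\alpha]_\star , \cO\bigr]_\star = [\alpha \star \alpha, \cO]_\star,
\]
using $[\alpha,\alpha]_\star = 2\,\alpha \star \alpha$ for a degree-one element. Assembling the three contributions yields
\[
\bigl(Q_0 + [\alpha,-]_\star\bigr)^2 \cO \;=\; \bigl[\,Q_\cA(\alpha) + Q_\cB(\alpha) + \alpha \star \alpha\,,\,\cO\,\bigr]_\star
\]
for every $\cO \in \cA \otimes \cB$.

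The ``if'' direction is then immediate: if \eqref{eqn:mc1} holds, the right-hand side vanishes identically. For the ``only if'' direction one observes that $[-,-]_\star$ is non-degenerate in the sense that an element which commutes with every observable, and in particular with $1 \in \cA \otimes \cB$ in all relevant models, must vanish (equivalently, one tests the vanishing of $[Q_0 \alpha + \alpha \star \alpha, -]_\star$ against a separating family of observables). I expect the small subtlety in this argument, namely passing from centrality of $Q_0 \alpha + \alpha \star \alpha$ to its actual vanishing, to be the only non-formal point; the rest of the proof is just bookkeeping of Koszul signs.
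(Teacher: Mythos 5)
Your main computation is correct and is the standard argument; the paper states this lemma without proof, so there is nothing to compare against, but the identity $\bigl(Q_\cA + Q_\cB + [\alpha,-]_\star\bigr)^2 = \bigl[\,Q_\cA(\alpha)+Q_\cB(\alpha)+\alpha\star\alpha\,,\,-\,\bigr]_\star$ is exactly the right content, and your bookkeeping of the Koszul signs in both the cross terms and the graded Jacobi step (using $[\alpha,\alpha]_\star = 2\,\alpha\star\alpha$ for odd $\alpha$) is sound. The one place where your argument does not close is the ``only if'' direction, and you are right that it is the non-formal point --- but your proposed fix does not work. Centrality of $\beta := Q_\cA(\alpha)+Q_\cB(\alpha)+\alpha\star\alpha$ does not imply $\beta=0$, and testing against $1 \in \cA\otimes\cB$ is vacuous since every element graded-commutes with the unit. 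The implication genuinely fails when $\cA\otimes\cB$ is graded-commutative: take $\cB=\CC$ and $\cA=\clie^\bu(\fg)$, so that $[\alpha,-]_\star\equiv 0$ and the deformed differential squares to zero for \emph{every} degree-one $\alpha$, even though $Q_\cA(\alpha)+\alpha\star\alpha$ need not vanish. So what your computation actually proves is ``square-zero iff the Maurer--Cartan expression is central,'' and the biconditional as stated requires the extra hypothesis that the degree-two part of the graded center of $\cA\otimes\cB$ vanishes (alternatively, one retains only the ``if'' direction, which is all that is used in the rest of the paper). You should either add that hypothesis explicitly or replace the appeal to $1$ with an honest separating family, which will not exist in the commutative case.
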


We refer to this as a Maurer--Cartan equation since it is equivalent to the condition that $\alpha$ is a Mauer--Cartan element in the dg Lie algebra whose underlying vector space is $\cA \otimes \cB$, differential is $Q_{\cA} + Q_{\cB}$, and Lie bracket is the commutator in the pure tensor product of algebras. 

From here on, we refer to $\alpha$ as an {\em algebraic coupling} (which we will soon relate to the notion of a {\em physical coupling}) between two quantum mechanical systems whose algebras of operators are given by ${\cA}, {\cB}$. 
For any such algebraic coupling we obtain a new dg algebra 
\[
\cA \, \underset{\alpha}{\til{\otimes}} \, \cB \define \left(\cA \otimes \cB \, , \, \d = Q_{\cA} + Q_{\cB} + \alpha \star (-) \right) 
\]
which is the BRST algebra of operators of the coupled system. 

We will now illustrate the notion of algebraic couplings in the context of a simple, yet fundamental, example. We will then connect this notion to both physical couplings and Koszul duality in the next sections. 

\subsection{Gauge coupling} \label{sec:gaugealg}

Fix a Lie algebra $\fg$ and
suppose that $\cA$ is the (commutative) algebra $\clie^\bu(\fg)$ of Chevalley--Eilenberg cochains of $\fg$. 
We think of $\cA$ as the dg-algebra of a BRST ghost in one-dimensional flat space. 

Next, we consider the Weyl algebra $\cW_V$ associated to a $\fg$-representation $V$. This is the algebra of operators of some topological quantum mechanics consisting of bosonic matter with a $\fg$-symmetry, and we intend to couple it to the theory given by the BRST ghost algebra. 

This will ultimately have the physical interpretation, per the discussion of \S \ref{s:lagcoupling},  of coupling the quantum mechanics to a 1d classical background (or flavor) gauge field. (In higher dimenions, if we had wanted to couple to a dynamical background gauge field and gauge the global symmetry of the matter theory, we would also include the usual $b$ antighosts in the BRST complex). 

We will see that the coupling to the ghost field is encoded by an explicit Maurer--Cartan element. 

Suppose $\{p^i\}_{i=1,\ldots,\dim V}$ is a basis for $V$ with dual basis $\{q_i\}$. 
The algebra $\cW_V$ is the free polynomial algebra on generators $p^i, x_i$ subject to the relation 
\[
[p^i, q_j] = \delta_j^i .
\]
In the notation of the previous section, we consider coupling the dg algebra $\cA$ to the algebra $\cB = \cW_V$. 
Notice that along with $V$, the algebra $\cW_V$ carries the natural structure of a $\fg$-representation where $\fg$ acts covariantly on the $p$-variables and contragrediently on the $q$-variables. 

Let $\{\fc_a\}_{a=1,\dots,\dim \fg}$ be a basis for $\fg$ and denote by $\{\fc^a\}$ the basis for $\fg^*$. 
In other words, $\fc^a$ is a linear BRST ghost for the gauge field. 
The $\fg$-representation $V$ is encoded by a linear map
\begin{align*}
\rho & \colon \fg \to {\rm End}(V) = V^* \otimes V \\
\rho(\fc_a) & = \rho_{aj}^i \, x_i \otimes p^j .
\end{align*}

Define the element $\alpha \in \cA \otimes \cB$ by the formula
\[
\alpha = \rho_{aj}^i \, \fc^a \otimes q_i p^j .
\]
We show that the element $\alpha$ satisfies the Maurer--Cartan equation \eqref{eqn:mc1}. 

First, notice that $Q_{\cA}$ is the Chevalley--Eilenberg differential 
\[
Q_{\cA} = f_{ab}^c \fc^a \fc^b \frac{\delta}{\delta \fc^c} .
\]
where $\{f_{ab}^c\}$ are the structure constants of the Lie algebra. 
Also, $Q_{\cB} = 0$ in this example. 

We compute
\[
Q_{\cA} \alpha = f_{ab}^c \rho_{cj}^i \fc^a \fc^b \otimes q_i p^j 
\]
and
\begin{align*}
\alpha \star \alpha & = \rho_{a,j}^i \rho_{b,\ell}^k \delta^j_k \fc^a \fc^b q_i p^\ell + \rho_{a,j}^i \rho_{b,\ell}^k \delta^\ell_i \fc^b \fc^a q_k p^j   \\
& = (\rho_{ak}^i \rho_{bj}^k  - \rho_{aj}^k \rho^i_{bk}) \fc^a \fc^b \otimes q_i p^l .
\end{align*}

Thus, we see that in order for the Maurer--Cartan equation to hold we must have $f_{ab}^c \rho_{cj}^i = \rho_{ak}^i \rho_{bj}^k  - \rho_{aj}^k \rho^i_{bk}$ for all $a,b,i,j$.
This is equivalent to the condition that $\rho$ be a well-defined $\fg$-representation. 

Before turning on the coupling $\alpha$, the decoupled system is of the form
\[
\cA \otimes \cB = \clie^\bu(\fg) \otimes \cW_V . 
\]
Turning on $\alpha$ has the effect of deforming this tensor product algebra to the algebra
\[
\cA \, \underset{\alpha}{\til{\otimes}} \, \cB = \clie^\bu(\fg ; \cW_V) ,
\]
where the right-hand side is the complex computing the Lie algebra cohomology of $\fg$ with coefficients in the $\fg$-representation $\cW_V$. 

Not all Maurer--Cartan elements are necessarily of this form. 
Since $\fg$ and $V$ are ordinary vector spaces, meaning in degree zero, the entire space of degree 1 elements in $\clie^\bu(\fg) \otimes \cW_V$ is $\fg^* \otimes \cW_V$. 
Such elements give rise to non-linear actions of $\fg$ on $\cW_V$. 

An analogous computation goes through when coupling to a fermionic quantum mechanical system, which furnishes the Clifford algebra associated to a $\fg$-representation.

\subsection{Other examples}

Let $p,q$ be two even variables satisfying the commutation relations
\[
[p,q] = 1.
\]
Denote by $\cW$ the resulting Weyl algebra generated by $p,q$.
 
Let $\psi$ and $\chi$ be two odd variables which satisfy the commutation relations
\[
\{\psi, \chi\} = \psi \chi + \chi \psi = 1.
\]
Denote by $\cC$ the (complex) Clifford algebra on these two odd generators.

Consider the algebra 
\[
\cA = \cW \otimes \cC 
\]
whose generators we label by $p_1,q_1,\psi_1,\chi_1$.
Consider also the (isomorphic) algebra
\[
\cB = \cW \otimes \cC
\]
whose generators we label by $p_2,q_2,\psi_2,\chi_2$.
We will produce an algebraic coupling between the algebras $\cA$ and $\cB$. 

Let $W(x_1,x_2)$ be a polynomial in two variables. 
Consider the odd element 
\[
\alpha_W = \chi_1 (\partial_{x_1} W)(q_1,q_2) + \chi_2 (\partial_{x_2} W)(q_1,q_2) .
\]
Since $\chi_i$ commutes with $q_j$ for all $i,j$, we see that $\alpha_W \star \alpha_W = 0$.
So, $\alpha_W$ is a Maurer--Cartan element in $\cA \otimes \cB$ for any $W$ and hence defines an algebraic coupling. 

Let's explicitly characterize the coupled system $\cA \, \underset{\alpha_W}{\til{\otimes}} \, \cB$ which as a cochain complex is 
\[
\big(\cW^{\otimes 2} \otimes \cC^{\otimes 2} , \d = \alpha_W \star (-) \big) .
\]
We compute the behavior of the differential on the generators of the algebra: 
\begin{align*}
\d q_i & = 0 , \quad i=1,2 \\
\d p_i & = \chi_1 (\partial_{x_1} \partial_{x_i} W)(q_1,q_2) + \chi_2 (\partial_{x_2} \partial_{x_i} W)(q_1,q_2) , \quad i = 1,2 \\
\d \chi_i & = 0, \quad i=1,2 \\
\d \psi_i & = (\partial_{x_i} W)(q_1,q_2), \quad i=1,2 . 
\end{align*}

In the case that the Hessian of the function $W$ is nonzero, we can relate this to an algebra that we introduced in Section \ref{sec:LG1}. 
Indeed, in this case it is easy to see that the only closed elements in our algebra are polynomials in $q_i$ and $\chi_i$. 
But, anything involving a single $\chi_i$ is exact by the second relation above. 
Thus, the cohomology is again the Jacobian ring of $W$:
\[
H^\bu \left(\cA \, \underset{\alpha_W}{\til{\otimes}} \, \cB\right) \cong \CC[q_1,q_2] / (\partial_i W)  .
\]
Observe that the algebraic coupling deformed us from a non-commutative associative algebra to one that is strictly commutative. 
\section{Physical couplings in QFT}\label{s:lagcoupling}

\subsection{Local operators}

Now that we have introduced algebraic couplings and sketched some elementary computations in homological algebra, we are ready to describe their quantum field theoretic context. First, we will introduce the notion of a physical coupling in the context of relevance for our discussion, and explain how it is related to the notion of algebraic coupling introduced in the previous section. Then, we will elucidate the appearance of Koszul duality. 

Let $\cB$ be the local operators of a one-dimensional translation invariant theory, and let $Q$ denote the BRST operator which acts on this space. 
The infinitesimal action by translations is through the time derivative $\frac{\partial}{\partial t}$. 

A one-dimensional theory is topological\footnote{In general dimensions, $Q$-exactness of all translation generators is the minimal condition for a cohomological theory to be called topological, and in some contexts this condition is referred to as weakly-topological. A stronger statement that many topological theories satisfy is $Q$-exactness of all components of the stress-energy tensor.} if there exists an endomorphism $\hat{Q}$ of the local operators, of cohomological degree $(-1)$, which satisfies
\beqn\label{eqn:triv}
\{Q, \hat{Q}\} = \frac{\partial}{\partial t} .
\eeqn
In words, translations along the line are trivial in BRST cohomology. 

The space of local operators $\cB$ is equipped with an algebra structure coming from the operator product expansion. 
Together with $Q$ we assume this equips $\cB$ with the structure of a  dg-algebra. In the case of an associative dg-algebra (dga), which has been our primary interest in this note, there is an associative multiplication and a differential which acts as a derivation on the associative product. 
More generally, $\cB$ may be endowed with higher operations which combine with $Q$ to form an $A_\infty$ algebra, and we will say a few remarks about this more general situation in what follows. 

\subsection{Topological descent}\label{sec:descent}
We are able to produce physical couplings from BRST complexes using the procedure of topological descent. The topological descent equations were first explored by Witten in the Donaldson twist of 4d $\cN=2$ theory \cite{W88} as a means to obtain higher-form-valued gauge-invariant observables from local operators of nonzero ghost number. The solutions to the descent equations were constructed more generally in \cite{MW97}; they may also be familiar to string theorists from their application to the BRST cohomology of noncritical string theories, e.g. \cite{WZ92}.

As above let $\cB$ be the local operators of a theory of topological quantum mechanics. 
We first recall how Equation \eqref{eqn:triv} solves the topological descent equations which allows us to construct a certain class of non-local operators out of local ones. Ultimately, through topological descent, we will obtain BRST-closed one-forms that, when integrated, can be added to the action of some topological quantum mechanics as a deformation \footnote{For topological theories in $d$-dimensions, one may similarly produce deformations to the original action by using topological descent to produce $d$-form-valued operators.}. The route to the BRST-closed condition will turn out to pass through the Maurer-Cartan equation discussed earlier. 

Because the theory is translation invariant, we can consider the complex of {\em differential form} valued operators
\[
\Omega^\bu(S \, , \, \cB) 
\]
where $S$ is any oriented one-manifold without boundary. 
This complex is equipped with the original BRST differential $Q$ acting on $\cB$ as well as the de Rham differential $\d$. 
This complex carries a total grading which combines the natural grading on differential forms and the internal grading / ghost number in the BRST complex $\cB$.  

Given a local (0-form) operator, $\cO = \cO^{(0)} \in \cB$, we can use the endomorphism $\hat{Q}$ to define the following one-form valued operator
\[
\cO^{(1)} \define (\hat{Q} \cdot \cO) \d t \in \Omega^1(\RR \, , \, \cB).
\]
The non-homogenous operator $\cO^{(0)} + \cO^{(1)}$ is closed in $\Omega^\bu(\RR \, , \, \cB)$ if and only if the following descent equations are satisfied 
\begin{align*}
Q \cO^{(0)} & = 0 \\
\d \cO^{(0)} + Q \cO^{(1)} & = 0 .
\end{align*}
If $\cO = \cO^{(0)}$ is of ghost number $k$ then $\cO^{(1)}$ is of ghost number $k-1$ but it also carries differential form degree one. 
So, with respect to the total grading on $\Omega^\bu(S, \cB)$ the operator $\cO^{(0)} + \cO^{(1)}$ does have a homogenous degree $k$.

We will use topological descent to produce Lagrangians that couple together two quantum mechanical systems. 
From the descent equations, we see that if $\{Q,\cO^{(0)}\} = 0$ then $\cO^{(1)}$ defines the following $Q$-closed Lagrangian
\[
\int_S \cO^{(1)} 
\]
where we integrate over any \textit{closed} \footnote{In general, we are interested in manifolds without boundary: if $S$ is a manifold with boundary, then there will be terms integrated over $\partial S$ that fail $Q$-closure.} oriented one-manifold $S$.
If $\cO^{(0)}$ is of ghost number $k$ then this Lagrangian is of ghost number $k-1$. 
Therefore, in order to pick out a standard (ghost number zero) Lagrangian one must start with a local operator of degree~$+1$. Thus, at the classical level, topological descent guarantees a BRST-invariant Lagrangian that one can add to the original Lagrangian as a deformation. 

At the quantum level, the Lagrangian gets upgraded to a path ordered exponential. Demanding that the coupled system satisfies BRST invariance to all orders in perturbation theory will force the coupling to satisfy the Maurer--Cartan equation.

The path-ordered exponential of a Lagrangian $\cL$ is of the form 
\[
\text{PExp} \int_\RR \cL .
\]
Starting with some local operator $\cO \in \cB$ of ghost number one, we apply this to the case $\cL = \cO^{(1)} = (\hat{Q} \cdot \cO) \d t$.
The BRST variation of the path integral is
\[
\int_{u=-\infty}^\infty \d u \left( \int_{-\infty}^{t = u} \d t (\hat{Q} \cdot \cO) (t) \right) \big(\partial_u \cO(u) + \Hat{Q} (Q \cO)(u) \big) \left( \int_{s=u}^{\infty} \d s (\hat{Q} \cdot \cO) (s) \right) .
\]
Integrating the $\partial_u(-)$ term by parts, we pick up two boundary terms corresponding to $u=s$ and $u=t$:
\[
\int_{u=-\infty}^\infty \d u \left( \int_{-\infty}^{t = u} \d t (\hat{Q} \cdot \cO) (t) \right) \big(\cO (u) (\Hat{Q}\cO) (u) - (\Hat{Q}\cO) (u) \cO (u) + \Hat{Q} (Q \cO)(u) \big) \left( \int_{s=u}^{\infty} \d s (\hat{Q} \cdot \cO) (s) \right) .
\]
We now recognize the Maurer--Cartan equation. 
Since $\Hat{Q}$ is a derivation, we observe that the last expression vanishes precisely when
\[
Q \cO + \cO \star \cO = 0 
\]
which is the Maurer--Cartan equation for $\cB$, thought of as a dg Lie algebra. 

Our analysis so far has been a little bit oversimplified. In particular, we assumed that the algebra of a topological quantum mechanics took the form of a dg algebra. 
When we work at the derived level, this is a very convenient assumption we can make without loss of generality. 
However, generically, when we are working with a theory that is topological in the sense of (\ref{eqn:triv}), it is useful to keep in mind that product between operators on the line may be nonsingular only up to $Q$-exact terms. 
Regularizing the divergences and keeping track of this more refined chain-level structure turns out to naturally produce an $A_\infty$-algebra---which is a structure which reflects the fact that the product may only be associative up to homotopy. 
The $A_\infty$ structure carries more information about configurations of operators in the original physical theory than passing to a quasi-isomorphic dg algebra. 
Carrying through an analogous analysis as above, by computing the condition for BRST invariance of the appropriately defined path-ordered exponential with a suitable regularization, then produces the Maurer-Cartan equation for an $A_{\infty}$ algebra (with different regularization choices related by quasi-isomorphisms \textit{of $A_{\infty}$-algebras})
\[
Q \cO + \mu_2(\cO, \cO) + \mu_3(\cO, \cO, \cO) + \ldots = 0,
\]
where $\mu_n(\cdot, \cdot, \ldots, \cdot)$ denote the $n$-ary operations of the $A_{\infty}$ algebra, with $\mu_1 = [Q, \cdot]$ and $\mu_2$ being the standard associative product.  

The deformed BRST differential is then modified by an infinite number of terms involving the higher products $Q \tilde{\cO} \mapsto Q \tilde{\cO} + \sum_{n, m}\mu_{n + m + 1}(\cO^n, \tilde{\cO}, \cO^m)$. The reader is encouraged to consult the appendices of \cite{GO} for more details, including explications of the $A_\infty$-algebra structure and the definition of the $n$-ary operations in terms of suitably regularized operator products on a topological line. 
In the remainder of the note, we will continue to implicitly assume that we are working with a strict dg algebra model and higher $A_\infty$ operations will play no role.

\subsection{From Maurer--Cartan elements to local couplings} 

We can now relate the notion of an {\em algebraic coupling} to the more standard notion of a local Lagrangian type coupling between quantum mechanical theories. 
The bridge is given by topological descent and is very similar in spirit to the last section where we only had a single algebra~$\cB$. 

Suppose that $\alpha$ is an algebraic coupling between two quantum mechanical systems $\cA$ and $\cB$. 
View $\alpha \in \cA \otimes \cB$ as a local operator inside of the uncoupled system. 
Since the uncoupled system is topological (as both $\cA$ and $\cB$ are) we can run topological descent to obtain the one-form operator $\alpha^{(1)} \in \Omega^1 (S , \cA \otimes \cB)$.
Further, we can form the Lagrangian 
\[
 \int_S \alpha^{(1)} .\]

By the previous discussion, we see that $\alpha$ is an algebraic coupling, meaning it satisfies the Maurer--Cartan equation for $\cA \otimes \cB$, if and only if $\int_S \alpha^{(1)}$ is BRST invariant (at least perturbatively) at the quantum level.

We turn now to an example. 

\subsection{One-dimensional gauge theory}
\label{sec:gaugetheory}

We return to the example of Section \ref{sec:gaugealg}, which clearly had some relationship to gauge theory, and finally flesh out the physics behind it. 

In this example $\cA = \clie^\bu(\fg)$ with $Q_{\cA}$ the Chevalley--Eilenberg differential of $\fg$.
As alluded to, this commutative dg algebra is equivalent to the dg algebra of local BRST operators of gauge theory on the real line. 
A gauge field is a one-form valued in the Lie algebra
\[
A \in \Omega^1(\RR ; \fg) .
\]
A ghost is simply a smooth $\fg$-valued function $\fc \colon \RR \to \fg$. 
The local BRST symmetry for the gauge field is
\[
\delta A^a = \d \fc^a + f_{bc}^a \fc^b A^c .
\]
Locally, any operator in the gauge field $A$ is exact for the linearized BRST differential. This is an application of the Poincar\'e lemma: all one-forms on $\RR$ are exact!
Thus, up to equivalence, all local operators can be built from the ghost $\fc$. 
The non-linear BRST operator acting on the ghost 
\[
\delta \fc^a = f_{bc}^a \fc^b \fc^c 
\]
is precisely the Chevalley--Eilenberg differential $Q_\cA$.

Since one-dimensional gauge fields only have topological degrees of freedom \footnote{As we remarked previously, in more general situations one must also include the Faddeev-Popov antighost, which in this setting is zero in cohomology.} we can run topological descent using the operator $\Hat{Q}_\cA$ defined by
\[
\Hat{Q}_\cA \fc^a = \iota_{\partial_t} A^a ,
\] where $ \iota_{\partial_t}$ is a contraction along the vector field generated by $\partial_t$. 
Notice that $\Hat{Q}$ commutes with the nonlinear part of the BRST operator. 
Further, this operator satisfies the commutation relation 
\[
\{\delta , \Hat{Q}_\cA\} = \frac{\partial}{\partial t} 
\]
and so satisfies the requisite data to run topological descent for the one-dimensional gauge field. 
Explicitly, applied to the ghost $\fc^a$, topological descent returns the one-form operator 
\[
(\fc^a)^{(1)} = \Hat{Q}_\cA \fc^a \d t = A^a  .
\]

The Weyl algebra $\cB = \cW_V$ (which has a trivial BRST differential $Q_{\cB} = 0$) is the algebra of operators of a system of topological mechanics on the real line. \footnote{In the case that one considers {\em complex} valued functions, one can think about this as the algebra of operators in a twist of $\cN=2$ supersymmetric quantum mechanics with a K\"ahler target $V$.}
The fields are pairs of functions $q  \colon \RR \to V$ and $p \colon \RR \to V^*$ and the action is 
\[
\int_\RR (p,\d q) = \int_\RR (p , \partial_t q) \d t 
\]
where $(-,-)$ is the natural symplectic pairing on $T^* V = V \oplus V^*$. 

This model is manifestly topological, with the equations of motion requiring $p,q$ to be constant functions.
This leads to the classical algebra of operators $\cO(T^* V) = \CC[q^i, p_j]$ where $i=1,\ldots, \dim (V)$. 
At the quantum level the Heisenberg commutation relations give the Weyl algebra $\cW_V$. 
 
Recall that the algebraic coupling defined by the data of a $\fg$-representation $\rho$ is $\alpha = \rho_{i,a}^j \fc^a q^i p_j$.
The one-form operator $\alpha^{(1)}$ is given by applying the descent procedure using the operator $\Hat{Q}$. 
Notice that in our setup here, $\Hat{Q}$ only acts on the gauge field.
We have
\begin{align*}
\alpha^{(1)} & = \Hat{Q} (\rho_{i,a}^j \fc^a q^i p_j) \d t  \\
& = \rho_{i,a}^j (\Hat{Q} \fc^a) q^i p_j \d t \\
& = \rho_{i,a}^j A^a q^i p_j ,
\end{align*}
leading to the Lagrangian $\int_{\RR} \rho_{i,a}^j A^a q^i p_j = \int_\RR p [A,q]$ which recovers the standard minimal coupling to a gauge field. 

The equations of motion for the fields $q,p$ are very simple: they are just required to be constant. 
Another approach, related to the Batalin--Vilkovisky (BV) formalism, is to impose the equations of motions of the fields $p,q$ cohomologically. 
That is, one can introduce one-form fields $q^{(1)}, p^{(1)}$ which satisfy $q^{(1)} = \d q$ and $p^{(1)} = \d p$. 
The equations of motion are simply $q^{(1)} = p^{(1)} = 0$. 
Doing this, we are treating the de Rham differential as a linear BRST operator $Q_{\cB} = \d$. 
Since the model is topological, there is a homotopy $\Hat{Q}'$ which satisfies $\Hat{Q}' q = \iota_{\partial_t} q^{(1)}$ and $\Hat{Q}' p = \iota_{\partial_t} p^{(1)}$, very similar to the case of the gauge field.

If we take into account the larger BRST complex, there are more terms present in the descendant operator $\alpha^{(1)}$:
\[
\alpha^{(1)} = \rho_{i,a}^j \left(A^a q^i p_j - \fc^a (q^i)^{(1)} p_j - \fc^a q^i (p_j)^{(1)}\right) . 
\] 
These additional terms involve the ghost field $\fc$, and will be present in any BV formulation of a 1d theory.  Indeed, while these couplings may look unfamiliar because of the explicit presence of the ghost field, the resulting couplings have the appropriate cohomological degree. The minimal BRST complex and its BV-BRST extension are quasi-isomorphic to one another. 

\section{Line defects}
\label{s:lines}
We extend the perspective on coupling of two one-dimensional systems to coupling a higher dimensional ``bulk'' theory to a theory of quantum mechanics along a one-dimensional line defect. 

The basic idea is very similar to the above discussion, though we will need to make some simplifying assumptions. 
We start with some bulk theory on a manifold of the form 
\[
\RR_t \times M 
\] 
where $M$ is some manifold. 
The crucial assumption we make is that the bulk theory is {\em topological} along $\RR \times \{x\}$ for any $x \in M$. 
This means that OPE's are independent of the coordinate $t$ \footnote{Again, this is generally true only up to $Q$-exact terms but our conclusions will not be affected by this simplification; we refer to the discussion of the previous section about the connection between dg-associative algebras and $A_{\infty}$ algebras.}.
An outcome of this assumption, is that the OPE of operators supported on $\RR \times \{x\}$, for any $x \in M$, have the structure of an associative algebra $\cA_x$. 
If we assume that the theory is also translation invariant along $M$ then there is also an equivalence (in our present context, a quasi-isomorphism) of algebras $\cA_x \simeq \cA_{x'}$ for $x, x' \in M$. 
This happens in the examples we discuss later on.
In this case, we drop the dependence on $x \in M$ and simply write the algebra as $\cA$. 

We will consider the problem of coupling to an auxiliary quantum mechanical system along $\RR_t$ whose algebra of local operators we will denote by $\cB$. 

Given these simplifying assumptions, the results of the previous sections characterize the coupling between $\cA_x$ and $\cB$.
We have the following.

\begin{prop}\label{prop:line}
There is a one-to-one correspondence between the space of couplings of $\cA_x$ to $\cB$ and the space of Maurer--Cartan elements in 
\[
\cA_x \otimes \cB .
\]
\end{prop}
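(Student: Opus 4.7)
The plan is to reduce the higher-dimensional defect problem to the one-dimensional coupling problem analyzed in Section~\ref{s:algcouple}. The topological assumption along $\RR_t$ does almost all the work: it guarantees that the OPE of operators supported on $\RR_t \times \{x\}$ is associative (up to homotopy), so that from the point of view of the line $\RR_t \times \{x\}$ the bulk theory genuinely looks like a one-dimensional topological quantum mechanics with algebra $\cA_x$. Once this is in hand, coupling a line defect with algebra $\cB$ to the bulk along $\RR_t \times \{x\}$ becomes an instance of the problem of coupling two topological one-dimensional systems whose operators are given by $\cA_x$ and $\cB$.

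First I would unpack ``coupling'' in the sense meant here: as in Section~\ref{s:algcouple}, a coupling is a deformation of the uncoupled BRST differential $Q_{\cA_x} \otimes 1 + 1 \otimes Q_{\cB}$ on $\cA_x \otimes \cB$ by an inner derivation $\tilde{Q} = [\alpha, -]_\star$ of ghost number one, for some $\alpha \in (\cA_x \otimes \cB)^1$. This sets up the forward map of the correspondence: to any coupling we associate its generating element $\alpha$. The consistency condition $(Q_{\cA_x} + Q_{\cB} + [\alpha, -]_\star)^2 = 0$ is, by the lemma preceding Section~\ref{sec:gaugealg}, exactly the Maurer--Cartan equation
\[
Q_{\cA_x}(\alpha) + Q_{\cB}(\alpha) + \alpha \star \alpha = 0,
\]
so the forward map lands in the space of MC elements.

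For the inverse map, given a Maurer--Cartan element $\alpha \in \cA_x \otimes \cB$ I would produce the corresponding coupling in two compatible ways. Algebraically, form the dg algebra $\cA_x \, \tilde{\otimes}_{\alpha}\, \cB$ with deformed differential $Q_{\cA_x} + Q_{\cB} + \alpha \star(-)$; the MC equation ensures this square-zero, so this is a legitimate coupled BRST algebra. Physically, view $\alpha$ as a ghost-number-one local operator of the uncoupled topological 1d system along $\RR_t \times \{x\}$ and apply the topological descent construction of Section~\ref{sec:descent} to obtain a one-form $\alpha^{(1)}$ and a candidate Lagrangian deformation $\int_{\RR_t} \alpha^{(1)}$. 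The computation of Section~\ref{s:lagcoupling} showing that BRST invariance of the path-ordered exponential is equivalent to the MC equation for $\alpha$ confirms that this physical coupling is consistent precisely because $\alpha$ is MC. The two procedures are mutually inverse: extracting the generator of the inner deformation from $\tilde{\otimes}_\alpha$ returns $\alpha$, and the algebraic coupling associated to a descended Lagrangian is again the original $\alpha$.

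I expect the main subtlety to be not any one of these steps individually — each is essentially established earlier — but rather justifying the reduction from the bulk theory to the 1d effective description in the first place. Concretely: showing that the $Q$-exactness of $\partial_t$ in the bulk, together with translation invariance along $M$, suffices to treat the algebra of operators on $\RR_t \times \{x\}$ as a genuine topological 1d operator algebra against which the arguments of Sections~\ref{s:algcouple}--\ref{s:lagcoupling} can be run verbatim. Once one is willing (as the paper is) to work at the level of strict dg algebras and ignore possible $A_\infty$ refinements transverse to the line, the proposition follows by assembling the pieces above.
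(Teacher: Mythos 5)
Your proposal is correct and takes essentially the same route as the paper: the paper offers no separate proof of Proposition~\ref{prop:line}, simply asserting that under the topologicality and translation-invariance assumptions the results of the preceding sections (inner deformations of the BRST differential, the square-zero/Maurer--Cartan lemma, and topological descent) characterize the coupling. Your assembly of those pieces, including flagging the reduction to an effective 1d operator algebra as the real content of the simplifying assumptions, matches the paper's intent.
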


We consider a simple example to illustrate this proposition more concretely. 

\subsection{Line defects in Chern--Simons theory}

As an example, let's consider perturbative Chern--Simons theory for a Lie algebra $\fg$ on $\RR^3 = \RR_t \times \RR^2$, and the problem of coupling a topological 
line defect along $\RR_t \times \{0\}$. 

Similar to the one-dimensional case, any local operator in Chern--Simons theory can be written, up to equivalence, as a function of a constant ghost field $\fc^a \in \fg^*$. 
The BRST operator acting on this field is simply $\delta \fc^a = f^a_{bc} \fc^b \fc^c$. 
Thus, in summary, the algebra of local operators for Chern--Simons theory is the Chevalley--Eilenberg complex 
\[
\cA = \clie^\bu(\fg) .
\]

If we treat Chern--Simons theory classically, then, the problem of coupling a one-dimensional line defect to Chern--Simons theory is identical to the problem of coupling a one-dimensional theory to a one-dimensional gauge theory for the Lie algebra $\fg$.
For topological quantum mechanics, see \S \ref{sec:gaugetheory}, we have described this coupling explicitly. 

What happens if we don't treat the bulk gauge theory classically? In perturbation theory, there is a single parameter for quantizing Chern--Simons theory on $\RR^3$ for a semisimple Lie algebra $\fg$: the Chern-Simons level. For an approach to this problem in modern mathematical language, see \cite[Chapter 5]{CosRenorm}. 
The discrete level can be modified by quantum corrections; restoring $\hbar$ for illustrative purposes, this in general leads to a perturbative series $k_0 + \hbar k_1 + \ldots$. In other words, at each order in perturbation theory one has a choice of level, which is a choice of an invariant symmetric bilinear form on $\fg$. Once one chooses a level for each order in $\hbar$, one can show that there is a unique $\hbar$-correction to the coupling $\alpha$ describing the line defect. 

\subsection{The universal line defect}
\label{sec:univ}

Proposition \ref{prop:line} characterizes the problem of coupling a fixed line defect to a higher dimensional theory. 
We now explain how the Koszul dual to the algebra of operators of the bulk theory represents the universal line defect. 

The key to this interpretation is the following idea.
Let's consider the problem of coupling a line defect to a simple bulk gauge theory. 
Assume that the algebra of local operators of the gauge theory is $\clie^\bu(\fg)$.
If the algebra along the line defect $\cB$ is concentrated in degree zero (so has trivial internal BRST differential), then couplings must necessarily lie in the subspace
\[
\alpha \in \fg^* \otimes \cB \subset \clie^\bu(\fg) \otimes \cB [1] .
\]
Thus, $\alpha$ can be thought of as a linear map $\phi_\alpha \colon \fg \to \cB$. 

The condition that $\alpha$ be a Maurer--Cartan element is $\d_{CE} (\alpha) + \alpha \star \alpha = 0$.
This condition is equivalent to a very natural condition on the map $\phi_\alpha$, that it is a map of Lie algebras:
\[
\phi_\alpha ([x,y]_\fg) = \phi_\alpha(x) \star \phi_\alpha(y) - \phi_\alpha(y) \star \phi_\alpha(x) .
\]

From here, we can see where Koszul duality enters. 
The characterizing property of the enveloping algebra $U \fg$ is that it is the universal algebra for which $\fg$ sits inside as a Lie algebra. 
This implies that there exists a unique map $\til \phi_\alpha$ making the following diagram commutative
\[
\begin{tikzcd}
\fg \ar[dr] \ar[rr, "\phi_\alpha"] & & \cB \\
& U \fg \ar[ur, "\til \phi_\alpha"'] .
\end{tikzcd}
\]

This spells out a bijective correspondence between couplings of a gauge theory to an external algebra $\cB$ along a line defect with maps of algebras $U \fg \to \cB$. 
The more general statement is the following.  

\begin{prop}
\label{prop:kd}
Let $\cA,\cB$ be dg algebras and suppose that $\ep \colon \cA \to \CC$ is an augmentation of $\cA$ and let $\cA^!$ denote the corresponding Koszul dual algebra. 
There is a bijective correspondence between:
\begin{itemize}
\item Maurer--Cartan elements 
\[
\alpha \in \cA \otimes \cB 
\]
satisfying $(\ep \otimes \id_B) (\alpha) = 0 \in \cB$.\footnote{Equivalently, this means that $\alpha$ lives in $\mathfrak{m} \otimes \cB$ where $\mathfrak{m} = \ker \ep \subset \cA$ is the augmentation ideal.}
\item Algebra homomorphisms 
\[
\phi_\alpha \colon \cA^! \to \cB .
\]
\end{itemize}
\end{prop}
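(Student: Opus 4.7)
The plan is to work with an explicit model for $\cA^!$ in which it is free as a graded algebra, so that algebra maps out of it can be parametrised by a single linear map from a generating subspace; the Maurer--Cartan condition then appears as compatibility with the differential. Concretely, I would use the bar-construction model, which yields a quasi-isomorphic presentation
\[
\cA^! \;\simeq\; T\!\left(\mathfrak{m}^\vee[-1]\right),
\]
where $T$ denotes the free associative algebra and $\mathfrak{m} = \ker \ep$. The differential $d_{\cA^!}$ is the unique derivation whose restriction to the generating subspace $\mathfrak{m}^\vee[-1]$ splits into a linear piece dualising the internal differential $\d_\cA|_{\mathfrak{m}} \colon \mathfrak{m} \to \mathfrak{m}$ and a quadratic piece, landing in $\mathfrak{m}^\vee[-1] \otimes \mathfrak{m}^\vee[-1]$, dualising the multiplication $\mu \colon \mathfrak{m} \otimes \mathfrak{m} \to \mathfrak{m}$.

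The bijection on underlying sets is then a direct consequence of the universal property of the free algebra: a graded algebra homomorphism $\phi \colon \cA^! \to \cB$ is the same data as an arbitrary degree-zero graded linear map $\phi|_{\rm gen} \colon \mathfrak{m}^\vee[-1] \to \cB$. Using the canonical identification
\[
{\rm Hom}^0\!\left(\mathfrak{m}^\vee[-1], \cB\right) \;\cong\; \left(\mathfrak{m} \otimes \cB\right)^1,
\]
this is equivalent to a degree-one element $\alpha \in \mathfrak{m} \otimes \cB$. The constraint $(\ep \otimes \id_\cB)(\alpha) = 0$ is then automatic, since $\alpha$ already takes values in the augmentation ideal.

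The main calculation is to show that the dg-algebra compatibility $\phi \circ d_{\cA^!} = \d_\cB \circ \phi$ is equivalent to $\alpha$ satisfying the Maurer--Cartan equation \eqref{eqn:mc1}. Both sides are derivations---the left-hand side because $\phi$ is multiplicative, the right-hand side because $d_{\cA^!}$ is a derivation on a free algebra---so it suffices to test the identity on the generating subspace. There, the linear piece of $d_{\cA^!}$ transports under the bijection above to the BRST terms $\d_\cA \alpha + \d_\cB \alpha$, while the quadratic piece, combined with the multiplicativity of $\phi$ (which turns a tensor $e^i \otimes e^j$ in $\cA^!$ into a product $\phi(e^i)\phi(e^j) \in \cB$), transports to $\alpha \star \alpha$. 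Putting these together, compatibility with differentials becomes
\[
\d_\cA \alpha + \d_\cB \alpha + \alpha \star \alpha = 0,
\]
which is exactly \eqref{eqn:mc1}.

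The main obstacle is bookkeeping. The Koszul sign rule together with the suspension shifts must be tracked carefully when dualising $\mu$, to guarantee that the transported quadratic term really produces $+\alpha \star \alpha$ rather than a sign-twisted or index-permuted variant. A secondary subtlety is that in the physically relevant examples $\mathfrak{m}$ is typically infinite-dimensional, so one must interpret $\mathfrak{m}^\vee$, the free algebra $T(\mathfrak{m}^\vee[-1])$, and the identification ${\rm Hom}(\mathfrak{m}^\vee[-1], \cB) \cong \mathfrak{m} \otimes \cB$ in a suitable completed or pro-finite sense for the argument above to go through; this is standard in the Koszul duality literature but should be set up explicitly at the start.
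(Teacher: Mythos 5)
Your argument is correct, but it is worth noting that the paper does not actually prove Proposition \ref{prop:kd} in this generality: it defers the general statement to [LV, Chapter 3] and only works out the special case $\cA = \clie^\bu(\fg)$ with $\cB$ concentrated in degree zero, where the bijection follows from the universal property of $U\fg$ applied to the Lie-algebra map $\phi_\alpha \colon \fg \to \cB$. Your bar-construction argument is essentially the general proof from the cited reference (the adjunction between twisting morphisms and dg-algebra maps out of the cobar construction), and the paper's $U\fg$ discussion is exactly its degree-zero shadow: there the quadratic part of $d_{\cA^!}$ dual to the multiplication on $\mathfrak{m}$ degenerates to the Chevalley--Eilenberg differential dual to the Lie bracket, and the free-algebra universal property becomes the universal property of the enveloping algebra. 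The two caveats you flag are the right ones, and I would insist you address the second more carefully: for infinite-dimensional $\mathfrak{m}$ the model is the \emph{completed} tensor algebra $\widehat{T}(\mathfrak{m}^\vee[-1]) \cong (B\cA)^\vee$, one must restrict to algebra maps continuous for the weight filtration, and the identification ${\rm Hom}^0(\mathfrak{m}^\vee[-1],\cB) \cong (\mathfrak{m}\otimes\cB)^1$ requires either degreewise finite-dimensionality (which holds in the paper's examples thanks to the extra spin grading) or a completed tensor product $\mathfrak{m}\,\widehat{\otimes}\,\cB$ on the Maurer--Cartan side. With those hypotheses made explicit at the outset, your proof is complete and, unlike the paper's in-text discussion, covers the case where $\cB$ has components in nonzero cohomological degree.
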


For a textbook account of this result we refer to \cite[Chapter 3]{LV}.

A number of remarks are in order. 
We recognize part of the first condition above from the characterization of coupling $\cA$ to a line defect whose algebra of operators is $\cB$.
What does this extra condition that $\ep \otimes \id_B$ annihilate the algebra coupling mean, physically? 
To answer this, we need to explain the role of the augmentation.

The choice of an augmentation $\ep \colon \cA \to \CC$ corresponds to a physical choice of vacuum. 
Recall that the original bulk theory is defined on a manifold of the form $\RR_t \times M$. 
Consider the compactification of the theory on $\RR_t \times M \rightarrow \RR_t$. 
In particular, if $M$ is a noncompact manifold, one must choose suitable boundary conditions at infinity. 
If $\Obs(M \cup \infty)$ denotes the algebra of observables upon making this choice of boundary conditions, then there is a canonical map of algebras $\cA \to \Obs(M \cup \infty)$.\footnote{Using the formalism of factorization algebras this has the elegant interpretation as the structure map corresponding to the manifold embedding $\RR \times M \hookrightarrow \RR \times (M \cup \infty)$.} 

A simplifying assumption to construct an augmentation arises by making a suitable choice of boundary conditions which render the compactified theory {\em trivial}.\footnote{It would be interesting to determine whether or not all augmentations can arise in this way.}
In other words, the algebra of operators of the compactified theory is the trivial one-dimensional algebra $\CC$ and so there is a canonical map $\cA \to \CC$.
This is the augmentation. 

In theories with a mass gap, choosing boundary conditions at $\infty$ requires making a choice of massive vacuum and prescribing that the fields approach their expectation values in that vacuum as they approach the boundary. If the theory is massless, it violates our assumption that the compactified theory is trivial, so we exclude these theories from consideration \footnote{These massless theories fit in to a more general `relative' version of Koszul duality, which is a special case of the centralizer of a map of algebras. It will be interesting to make this connection precise in future work.}. One way to think about a ``mass gap'' in these topological theories is the following. Before passing to $Q$-cohomology, we study the theory at the level of chain complexes with differential $Q$. There, we may still access the generator of translations along the line, even though it is $Q$-exact, and we may similarly endow our manifold $M$ with some metric and study the spectrum of its Laplacian. The choice of a metric on $\RR \times M$ is merely a gauge choice, and we can study the spectrum of the resulting Hamiltonian. Since we are interested in the twisted theory with a topological line, every massive state is $Q$-exact (and we can make them infinitely massive by changing our gauge), and therefore trivial in the twisted theory; we exclude from our present consideration any theory with zero-energy states that are nontrivial in cohomology. 

With this physical interpretation of the augmentation, we can make sense of the condition that the algebraic coupling satisfy $(\ep \otimes \id_B) \alpha = 0$.
This simply says that the coupling preserves the choice of a vacuum.
This means that the coupled system corresponding to $\alpha$ leaves the vacuum unchanged.

We can finally make sense of what we mean by the ``universal'' line defect. 
Fix an augmentation, or choice of vacuum. 
Rather tautologically, there is a natural Maurer--Cartan element in the algebra 
\beqn\label{eqn:univ}
\alpha_{univ} \in \cA \otimes \cA^! 
\eeqn
which corresponds to the identity map $\id_{\cA^!}$ under the equivalence of Proposition \ref{prop:kd}. 
This coupling is universal, in the following sense. 
Any another coupling $\alpha \in \cA \otimes \cB$ can be obtained by
\[
\alpha = \phi_\alpha (\alpha_{univ})
\]
where $\phi_\alpha \colon \cA^! \to \cB$ is the corresponding map of algebras. 
We note that this universal property is sensitive to the choice of a vacuum; the element $\alpha_{univ}$ depends on this vacuum.  

\subsection{Universal line defect in Chern--Simons theory} 

A description of the universal line defect of perturbative Chern--Simons theory on $\RR^3$ follows from the discussions above. 

As we have already remarked, the algebra local operators of Chern--Simons theory on $\RR^3$ is equivalent to the Chevalley--Eilenberg complex $\cA = \clie^\bu(\fg)$. 

There is a natural choice of an augmentation at the classical level, namely the projection onto the unit observable $\ep\colon \cA \to \CC$. 
To see this how this augmentation persists at the quantum level, it is best to take the point of view of choosing a vacuum. 

One way to construct the augmentation is to place 3d Chern--Simons theory on the slab with a suitable choice of boundary conditions
\[
(x,y,t) \in \RR^2 \times [0,1] .
\]
We will see that with a clever choice of boundary conditions, the theory compactified along $\RR^2 \times [0,1] \to \RR^2$ (and hence upon further compactification to $\RR$) becomes trivial. 
This setup and boundary conditions were described in \cite{Aamand}, and we follow the presentation there. 

For simplicity, suppose that $\fg$ is an even dimensional Lie algebra. 
Then, choose subalgebras $\lie{l}_0, \lie{l}_1 \subset \fg$ such that 
\[
\dim \lie{l}_0 = \dim \lie{l}_1 = \frac{\dim \fg}{2} 
\]
and that the restriction of the trace pairing to both $\lie{l}_0,\lie{l}_1$ vanishes. 
In other words, $\lie{l}_0, \lie{l}_1$ are maximally isotropic subalgebras with respect to the trace pairing. 

The phase space for Chern--Simons theory at $\RR^2 \times \{0\}$ and $\RR^2 \times \{1\} \subset \RR^2 \times [0,1]$ is isomorphic to 
\[
\Omega^\bu(\RR^2) \times \fg[1] .
\]
If we demand that our fields have compact support, then the trace pairing endows this cochain complex with a symplectic structure. 
A boundary condition is specified by choosing a Lagrangian subspace of this symplectic space which is compatible with the non-linear gauge symmetries. 

At $\RR^2 \times \{0\}$ we choose the boundary condition to be the Lagrangian
\[
\Omega^\bu(\RR^2) \times \lie{l}_0 [1] \subset \Omega^\bu(\RR^2) \times \lie{g}_0 [1]  .
\]
In other words, we require that the gauge field satisfy $A|_{t=0} \in \Omega^1 (\RR^2) \times \lie{l}_0$ and similarly for the ghost and (in the BV formalism) anti-field. 
At $\RR^2 \times \{1\}$ we choose the boundary condition to be the Lagrangian
\[
\Omega^\bu(\RR^2) \times \lie{l}_1 [1] \subset \Omega^\bu(\RR^2) \times \lie{g}_0 [1]  .
\]
In other words, we require that the gauge field satisfy $A|_{t=1} \in \Omega^1 (\RR^2) \times \lie{l}_0$ and similarly for the ghost and anti-field. 

One can show that the compactification $\RR^2 \times [0,1] \to \RR^2$ results in a trivial theory on $\RR^2$, which is sufficient to guarantee a trivial theory on $\RR$ upon further compactification to 1d and satisfy our condition. 
There is a slight modification one can make in the case that $\fg$ is not even dimensional; we refer to \cite{Aamand} for more details. 
We point out that this vacuum gives a bit more structure than an augmentation as discussed in the previous section, where we only assumed that you obtain a trivial theory after compactification all the way down to one dimension. In fact, this slab setup constructs an augmentation for the local operators of Chern--Simons theory as an $\EE_2$-algebra, which  is used in \cite{Aamand} to relate the observables of Chern--Simons theory to the quantum group.

Finally, since the Koszul dual of $\clie^\bu(\fg)$ is 
\[
\clie(\fg)^! \simeq U \fg 
\]
we interpret the enveloping algebra $U \fg$ as the algebra of operators on the universal line defect for Chern--Simons theory on $\RR^3$. 

\section{Koszul duality: a free theory example}\label{s:free}

We turn to an illustrative example in three-dimensions of a free field theory on $\RR^3$. In spite of its simplicity, this theory is a useful toy model for illustrating the abstract considerations of the previous section. 
This theory we study is equivalent to the holomorphic twist of the $3d$ $\cN=2$ chiral multiplet. The details of this twist for general 3d $\cN=2$ theories appear in \cite{CDG, ACNV, ESW}, to which we refer the interested reader for details. 

The presentation of the holomorphic twist depends on the R-symmetry data we assign to the untwisted theory. 
The component fields of a 3d chiral multiplet of R-charge $r$ include a complex scalar $\phi$ of R-charge $r$ and four fermions $\psi_{\pm}$ with R-charges  $r-1$ and their conjugates $\bar{\psi}_{\pm}$ with R-charges $1-r$. We choose $r=0$ \footnote{Since the R-charge governs the twisted spin of the fields, more general R-charge assignments will result in the fields transforming as sections of certain powers of the canonical bundle: the $r/2$ power for the twisted scalar, and the $1-r/2$ power for the twisted fermion.}.  

The input data is simply a complex vector space $V$ in which the scalar field takes values. (In a more interesting bulk theory where the chiral multiplet were coupled to a $G$ gauge field, then $V$ would be a unitary linear $G$-representation of our choosing). Following \cite{DGP, CDG}, in preparation for the holomorphic twist, we can express the 3d $\cN=2$ superalgebra on $\CC \times \RR$ as $\left\lbrace Q_+, \bar{Q}_+ \right\rbrace = -2 i \partial_{\bar{z}}, \left\lbrace Q_-, \bar{Q}_- \right\rbrace = 2 i \partial_{z}, \left\lbrace Q_+, \bar{Q}_- \right\rbrace = \left\lbrace Q_-, \bar{Q}_+ \right\rbrace = i \partial_t.$

After the holomorphic twist with respect to the choice of supercharge 
\[
Q = \bar{Q}_+ 
\]
the surviving degrees of freedom are the scalar field and one of the fermions, $\bar{\psi}_-$. 
One can check from the supersymmetry transformations that these components are $Q$-closed, but not exact. 

The holomorphic twist involves a choice of coordinates of the form
\[
(t , z) \in \RR_t \times \CC_z \simeq \RR^3
\]
where $t$ is a real coordinate along $\RR$ and $z$ is a holomorphic coordinate along the remaining $\CC \cong \RR^2$ direction. 
The theory will be topological along $\RR$ but only holomorphic along $\CC_z$. 
This means that the OPE depends trivially on $t$ and holomorphically on $z$. 

One may perform descent on the two $Q$-closed fields of the twisted theory. If we restrict our attention to $\RR$ right away, following the discussion of the previous section, we can straightforwardly compute the descendants. For completeness, however, we note that there is a straightforward generalization of topological descent to the mixed holomorphic-topological case \cite{OY}.  We will employ this generalization here to compute three-dimensional holomorphic-topological descendants \cite{CDG}, which will restrict to our familiar 1d topological descendants on the line. 

If we denote the descendent 1-forms associated to an operator $\mathcal{O}$ by $\mathcal{O}^{(1)}$, then by construction these operators satisfy
\[
Q(\d z \wedge \mathcal{O}^{(1)}) = \d (\d z \wedge \mathcal{O})
\]
where $\d$ is the total de Rham differential on spacetime. Equivalently, we have $Q \mathcal{O}^{(1)} = (\d \bar{z}\partial_{\bar{z}} + \d t \partial_t)\mathcal{O}$. 
This equation can be solved for the descendent, and the solution is given by $\mathcal{O}^{(1)} = {i \over 2}Q_+ \mathcal{O} \d\bar{z} - i Q_- \mathcal{O} \d t$. 

Wedging the one-form descendent with $\d z$ produces a two-form operator that can be integrated along a two-cycle in spacetime to produce a $Q$-closed operator. 

Computing the corresponding descendants for the scalar and the fermion gives (up to signs and constants we suppress)
\begin{align}
\label{eqn:N=2desc1}
\phi^{(1)} &\sim \psi_{-} \d t + \psi_+ \d \bar{z} \\
\label{eqn:N=2desc2}
\bar{\psi}_-^{(1)} &\sim \partial_{\bar{z}}\phi \d t + \partial_t \phi \d \bar{z}.
\end{align}

The twisted theory is obtained by introducing the holomorphic supercharge $Q$ as a (linear) BRST operator. 
As shown in \cite{ACNV}, and reviewed in \cite{CDG,ESW}, one finds that in the BV formalism the twisted theory can be written as a first-order action.
Performing such a rewriting of the action will in general yield a different algebra of local operators than that of the original theory, but the two algebras will be quasi-isomorphic to one another (i.e. with identical $Q$-cohomology and any higher operations).

The scalar in the twisted, first-order theory is identical to the scalar in the physical theory with its twisted spin, and for $r=0$ they are simply identical. There is also a one-form $\psi$ in the twisted theory that may be identified, in the physical theory, with the descendant~$i \bar{\psi}_-^{(1)}$.

Hence, our fields consist of a scalar $\phi$ valued in $V$ and a partial one-form 
\[
\psi = \psi_t \d t + \psi_\zbar \d \zbar 
\]
valued in $V^*$, the dual vector space.
The free action reads
\[
\int_{\RR^3} \phi \partial_t \psi_\zbar + \int_{\RR^3} \phi \partial_\zbar \psi_t 
\]
which, after integration by parts, can be written more compactly in the twisted theory as\footnote{Although $\phi^{(1)}$ does not appear in the action, it plays an interesting role in understanding higher operations (brackets); see \cite{CDG} for details.}

\[
\int(\d z \wedge \psi) \wedge \d \phi.
\]

We may return to the physical theory for a free chiral multiplet from this twisted representation by adding a quadratic term for $\psi$ and integrating it out to return it to its on-shell value. 
Such quadratic terms are killed in the twisted formalism because they are $Q$-exact.  

The equations of motion for the scalar $\phi$ are
\beqn\label{eqn:phiEOM}
\frac{\partial}{\partial t} \phi = 0; \quad
\frac{\partial}{\partial \zbar} \phi = 0 .
\eeqn
The first equation implies that $\phi$ is constant in the $t$-direction and the second implies that $\phi$ is holomorphic in the variable $z$. 
The equation of motion for the one-form field $\psi$ reads
\beqn\label{eqn:psiEOM}
\frac{\partial}{\partial t} \psi_\zbar - \frac{\partial}{\partial \zbar} \psi_t = 0 .
\eeqn

There is an abelian gauge symmetry which acts on the $\psi$-fields, while leaving $\phi$ invariant.
The gauge symmetry is given by a $V^*$-valued scalar $\chi$ which acts on the fields by
\begin{equation}\label{eqn:freegauge}
\delta \psi_t = \frac{\partial}{\partial t} \chi; \quad
\delta \psi_\zbar = \frac{\partial}{\partial \zbar} \chi .
\end{equation}
This gauge symmetry in the twisted theory arises from the action of the holomorphic supercharge $Q$ on the fermion $\Bar{\psi}_-$. 

This ghost $\chi$ in the twisted theory is precisely $i \bar{\psi}_-$ in the physical theory. The BRST transformation of the twisted theory is identified with the SUSY transformation in the physical theory. Indeed, one may compute the SUSY transformation of $\psi$ and observe that it coincides with this symmetry on-shell.

It will be convenient to rewrite the formulas for the descendant fields in the twisted language. 
From Equations \eqref{eqn:N=2desc1}--\eqref{eqn:N=2desc2} for the untwisted descendants, we denote the twisted descendants as:
\begin{align}
\label{eqn:N=2desc1tw}
\phi^{(1)} &= \phi_t \d t + \phi_{\zbar} \d \bar{z} \\
\label{eqn:N=2desc2tw}
\psi^{(1)} &= \psi_{t \zbar} \d t \d \zbar  \\
\label{eqn:N=2desc3tw} 
\chi^{(1)} &= \psi .
\end{align}
In the BV formalism, the two-form $\psi_{t \zbar} \d t \d \zbar$ can be thought of as the anti-ghost to the ghost $\chi$. 
It will not play a role in any of the discussion below. 
Also, the one-forms $\phi_t \d t, \phi_{\zbar} \d \zbar$ combine to form an anti-field for $\psi$. 
We have also included the descendant for the ghost field $\chi$.
\subsection{Local operators}

We move towards a description of the algebra of local operators \textit{of all cohomological degrees} at $z = t = 0$, up to gauge equivalence. 
We recall that (local) operators are (local) functionals on the space of fields, so that while the field $\phi$ is valued in $V$, its local dual operator is valued in $V^*$. Although we are treating operators in this section, we will make use of the standard physics abuse of notation and denote them by the same symbols as the corresponding fields \footnote{Properly speaking, the R-charges we listed in the previous section were also those of the local operators. The fields have the opposite charges.}. 

The equations of motion imply that the $\phi$ has vanishing $t$ and $\zbar$-derivatives. 
Thus, the only local operators built from $\phi$ must be given by holomorphic $z$-derivatives. 
Let $V^*_{(n)} \simeq V^*$ denote the span of the local operators of the form $\partial_z^n \phi(0)$.
Thus, the linear local operators in $\phi$ comprise the vector space $\oplus_{n \geq 0} V_{(n)}^*$.
Since $\phi$ is a scalar, the algebra of operators involving only $\phi$ combine to form the symmetric algebra ${\rm S}^\bu \left(\oplus_{n \geq 0} V^*_{(n)}\right)$.

Next, consider local operators of the one-form field $\psi = \psi_t \d t + \psi_\zbar \d \zbar$. 
We claim that up to gauge equivalence there are no nontrivial local operators depending on $\psi_t, \psi_\zbar$.  
The easiest way to see this is to introduce the two-form field
\begin{align*}
\til \psi & \define \d z \wedge \psi \\ & = \psi_t \d z \wedge \d t + \psi_\zbar \d z \wedge \d \zbar .
\end{align*}
Then, the equations of motion are equivalent to the condition that this two-form be closed for the de Rham differential
\[
\d \left(\til \psi\right) = 0 .
\]
In other words, $\til \psi$ is required to be a closed two-form.

In terms of the two form field $\til \psi$ there is a one-form gauge symmetry by the one-form field 
\[
\til \chi \define \d z \wedge \chi .
\]
In this notation, the gauge symmetry simply reads $\delta (\til \psi) = \d (\til \chi)$, which is completely equivalent to Equation \eqref{eqn:freegauge}.
By a slight refinement of the Poincar\'{e} lemma for partial one-forms\footnote{This version of the Poincar\'{e} lemma is simply the statement $H^1_{\rm dR} (\RR) = H_{\dbar}^{0,1}(\CC)= 0$.}, there exists $\chi$ such that the one-form $\til \chi$ renders $\til \psi \in Z^2(\RR^3)$ gauge trivial.

Finally, consider local operators built from the ghost field $\chi$. 
The gauge transformations preserving $\psi = 0$ satisfy $\frac{\partial}{\partial \zbar} \chi = \frac{\partial}{\partial t} \chi = 0$. 
Thus, similarly to $\phi$, the local operators one can build from $\chi$ only involve holomorphic $z$-derivatives. 
Let $V_{(n+1)} \simeq V$ denote the span of the local operators of the form $\partial_z^n \chi(0)$.
Thus, the linear local operators in $\chi$ comprise the vector space $\oplus_{n > 0} V_{(n)}$.
Since $\chi$ is a gauge symmetry, the algebra of operators involving only $\chi$ combine to form the exterior algebra $\wedge^\bu \left(\oplus_{n > 0} V_{(n)}\right)$ \footnote{More precisely, we obtain the {\em graded symmetric} algebra $\Sym(V_{(n)}[-1])$ of $V_{(n)}$ concentrated in degree $+1$.
We will only use the underlying $\ZZ/2$-graded vector space in this section.}. Note that the operators coming from $\chi$ are concentrated in cohomological degree 1 and higher, as expected. Note that in the twisted theory, cohomological degree is identified with the R-charge of the twisted fields, which further coincides with ``ghost number'' in this example. 

Summarizing our discussion thus far, we see that the space of local operators at $z=t=0$ is
\beqn\label{eqn:freeops}
\cA_0 = \bigotimes_{n \geq 0} {\rm S}^\bu \big(V^*_{(n)}\big) \otimes \bigotimes_{m > 0} \wedge^\bu \big(V_{(m)}\big)
\eeqn
where $V_{(n)}^* \simeq V^*$ indicates the vector space spanned by local operators of the form $\partial^n_{z} \phi(0)$ and $V_{(m+1)} \simeq V$ indicates the vector space spanned by local operators of the form $\partial^{m}_z \chi$.

\subsection{Character of local operators}

We consider the following two symmetries of the free theory.
The first corresponds to conformal symmetries along the $z$-plane. 
Concretely, consider the piece of this generated by $\U(1)$ rotations of $\CC_z$.
Infinitesimally, this is generated by the holomorphic vector field $z \frac{\partial}{\partial z}$ acting on the fields $\phi$ and $\d z \wedge \psi$. 

Second, there is a $\U(1)$ flavor symmetry, where $\phi$ transforms with charge $+1$ and $\psi$ transforms in the dual representation, $-1$.

The character of the chiral algebra furnished by the space of local operators $\cA_0$ can be written in terms of fugacities $q$ and $s$ where $q$ is the fugacity of the $\U(1)$ conformal symmetry and $s$ is the fugacity of the $\U(1)$ flavor symmetry. The fugacity $q$ is conjugate to the twisted spin $J$, the fugacity $s$ is conjugate to the flavor charge $e$, and $r$, the R-charge, is the homological grading:
\[
\chi_0(q , s) \define \Tr (-1)^r q^J s^e =  \frac{(q s^{-1} ; q)_\infty}{(s ; q)_\infty}  = \prod_{n \geq 0} \frac{1 - s^{-1} q^{n+1}}{1 - s q^n}.
\]

This is nothing but the usual expression for the supersymmetric index of a free chiral multiplet. The degrees of freedom counted by this index are given by the $Q$-closed fields in the physical theory: $\left\lbrace \partial_z^n \phi \right\rbrace_{n \geq 0}, \left\lbrace \partial_z^n \bar{\psi}_- \right\rbrace_{n \geq 0}$. The former have $r=0, J = n$ and the latter have $r=1, J= n+1$. These perfectly match the local operators in the twisted theory generated by holomorphic derivatives of $\phi, \chi$ as enumerated in the previous subsection.

\subsection{Operators on a line defect}
As we have explained, we consider the coupling of our bulk theory to a topological line defect along $\mathbb{R}_t \times \{z=0\}$. 
Koszul duality will arise in terms of such line operators of the original three-dimensional theory. 
As this theory is free, the analysis here is quite simple, but it we hope it serves as an instructive exhibition of the general discussion in Section \ref{s:lines}. 

Let us consider the space of local operators $\cA_0$ at $z=t=0$. 
As announced in the general section, we are to interpret the Koszul dual of this algebra as the algebra of local operators on the ``universal line defect'' along
\[
\{z=0\} \subset \RR \times \CC .
\]
We proceed to compute this algebra from first principles.  

Recall that the theory is topological along this line. 
Along the line, the descendants of the fields $\phi, \psi$ are given by the $\d t$ components of the one-form descendants in the full theory \eqref{eqn:N=2desc1tw}--\eqref{eqn:N=2desc3tw}. 
The formulas for the descendant fields along $\{z=0\}$ are given as follows
\begin{align*}
\Hat{Q} \chi = \psi_t \\
\Hat{Q} \phi = \phi_t . 
\end{align*}
We are changing notations slightly to more easily line up with the fields of the twisted theory. 
Here $\psi_t$ is the $\d t$ component of $\psi^{(1)}$ and $\phi_t$ is the $\d t$ component of $\phi^{(1)}$. 
Notice that since $\chi$ is fermionic in the twisted theory, the field $\psi_t$ is bosonic (ghost number zero). 
Since $\phi$ is bosonic, the field $\phi_t$ is fermionic (ghost number $-1$).

Recall that if $\cB$ is the algebra of operators of an auxiliary quantum mechanical system along $\{z=0\}$, then couplings between $\cA_0$ and $\cB$ are controlled by Maurer--Cartan elements in $\mathcal{A}_0\otimes \mathcal{B}$.
The Lagrangian coupling is obtained by descent. 

Any Maurer--Cartan element lives in degree one, by definition. This leaves us with two options in our example. 
Let us introduce a basis $\{e_a\}$ of $V$ with dual basis $\{e^a\}$.
We may perform descent on the following two classes of local operators in $\cA_0 \otimes \cB$:
\begin{itemize}
\item $\partial_z^n \chi^a \otimes J_a[n]$ where $|\partial_z^n \chi^a| = 1, |J[n]|=0$, and $J = J_a e^a \in V^*$ and
\item $\partial_z^n \phi_a \otimes K^a[n]$, where $|\partial_z^n \phi_a|=0, |K[n]| = 1$, and $K = K^a e_a \in V$. 
\end{itemize}
At the moment, $J, K$ are stand-ins for some arbitrary local operators of the appropriate cohomological degree, and we will denote their descendants by a defect algebra endomorphism $\hat{Q}_{d}$ (which need not to be the same as the $\hat{Q}$ acting on the bulk algebra) by $J^{(1)}, K^{(1)}$. 
Notice that the first item is a presentation for the identity element $\id_{V_{(n)}} \in V_{(n)} \otimes V_{(n)}^*$ and so agrees with the form of the universal coupling \eqref{eqn:univ}. 
Similarly, the second item is a presentation for the identity element $\id_{V^*_{(n)}}$.

Performing descent, we see that the most general coupling takes the form
\[
\sum_{\ell \geq 0} \int_{\{z=0\}} \frac{1}{\ell!} K^{(1)} [\ell] \partial_z^\ell \phi + \sum_{k \geq 0} \int_{\{z=0\}} \frac{1}{k!} J[k] \partial_z^k \psi_t  + \sum_{j \geq 0} \int_{\{z=0\}} \frac{1}{j!} K[j] \partial_z^j \phi_t +  \sum_{i \geq 0} \int_{\{z=0\}} \frac{1}{i!} J^{(1)}[i] \partial_z^i \chi.
\] 

Notice that only the $\d t$-components of the one-forms $\psi, \phi^{(1)}$ may couple to a field along the defect, which we denote by $\psi_t, \phi_t$. This is equivalent to what we would have obtained had we performed ordinary topological descent directly on the line.
Also, observe that while the one-form field $\psi_t$ couples to a scalar $J[k]$, the field $\phi$ couples to a one-form along the line $K^{(1)}[\ell] = K_t [\ell] \d t$ (and vice versa for their counterparts). 
In total, all of the Lagrangian couplings are one-forms along $\RR_t$ with ghost number 0.

Only $\psi_t, \phi$ appeared in the original action of our twisted theory. 
On the other hand, the descendant field $\phi_t$ (which is a component of $\phi^{(1)}$) and the ghost $\chi$ appear in the coupling above. 
This is because local operators and their descendants are treated democratically when we generate the line defect couplings: we start with the tensor products of local operators (of any ghost number) from both the bulk and defect algebra and perform descent on the Maurer--Cartan elements, using the assumption that both the bulk and line defect theories are topological. 
If we start just with the physical fields, then the resulting set of couplings is slightly redundant.\footnote{The BV formalism provides an elegant way to avoid these redundancies from the outset, though we avoid using it in this note for clarity of presentation.} 
In other words, we must really work with the space of Maurer--Cartan elements of $\cA \otimes \cB$ \textit{modulo gauge redundancies}.

Translated to local operators along the line, this means that there are gauge redundancies and constraints among the space of local operators spanned by $J,K$ and their descendants.
Along the line $\{z=0\}$ one has the gauge symmetry $\delta \psi_t = \partial_t \chi$ and the descendant equation $\delta \phi^{(1)} = \partial_t \phi$. 
The gauge symmetry implies that there is a redundancy among the operators, namely $\delta K^{(1)}[j] = \partial_t K[j]$. 
Further, the descendant equation among the fields implies there is the following descendant equation among the operators $\delta J^{(1)} [k] = \partial_t J[k]$. 
In total, up to gauge equivalence we see that only the degree zero operators $\{J[k]\}$ and the degree $+1$ operators $\{K[j]\}$ remain.

Applying our analysis from \S \ref{s:lines} to obtain the Koszul dual algebra is almost trivial to compute in this case. 
The twisted BRST charge, $Q$, acts only on the bulk operators, which it annihilates. The operator products on the bulk terms, as we have seen, are just free products, so we have no interesting constraints on the OPEs for the defect operators from the quadratic terms: free products on the defect operators will satisfy the Maurer--Cartan equation. 
The operators $K[n], J[n]$ are anticommuting and commuting, respectively, per their cohomological degree above. In total we have

\begin{equation}\label{eq:freedual}
\mathcal{A}_0^{!} = \bigotimes_{n < 0} {\rm S}^\bu \big(V^*_{(n)}\big) \otimes \bigotimes_{m \leq 0} \wedge^\bu \big(V_{(m)}\big).
\end{equation} Notice that since the holomorphic derivatives increment the spin/degree of the operators, then in order for the coupling to have total spin 0, the gradings of the algebras are negative the gradings of the bulk algebras.

\subsection{Choosing a vacuum} 
In \eqref{eqn:freeops} we deduced the algebra of local operators $\cA_0$. 
Implicitly, in the above calculation of the Koszul dual of $\cA_0$ we assumed a natural augmentation given by the projection onto the ${\rm S}^0(V^*_{(0)})\simeq \CC$ piece of $\cA_0$. 
Indeed, each of the couplings listed above are annihilated by this augmentation. 
In \S \ref{sec:univ} we explained that such an augmentation arises physically from the choice of a vacuum. 
In this section, we explain this choice of a vacuum. 

This amounts to extending the theory on affine space $\RR \times \CC$ to a theory on $\RR \times \PP^1$, thinking of $\PP^1$ is the one-point compactification of $\CC$, by fixing the behavior for the fields at $z = \infty$. 
One subtle thing is that since $\PP^1$ does not admit a nowhere vanishing holomorphic volume form, it is most natural to locally replace $\psi$ by the two-form field $\til \psi = \d z \wedge \psi$.
Denote the $\d z  \wedge \d \zbar$ component of $\til \psi$ by $\til \psi_{z \zbar}$ and the $\d z  \wedge \d t$ component by $\til \psi_{zt}$. 

For the desired vacuum, we choose the boundary conditions on the \textit{fields} of the twisted theory:
\begin{itemize}
\item the scalar $\phi$ vanishes at $\infty \in \PP^1$. 
That is, $\phi$ is actually a section of the line bundle $\cO(-1)$ \footnote{More accurately this bundle should be written as $\pi^* \cO(-1)$ where $\pi \colon \RR \times \PP^1 \to \PP^1$ is the projection.}, 
\[
\phi \in \Gamma \left(\RR \times \PP^1 \, , \, \cO(-1) \otimes V \right) 
\]
\item the components of the one-form $\psi_{t}, \psi_{\zbar}$ have simple poles at $\infty \in \PP^1$.
That is, $\psi_t,\psi_{\zbar}$ are twisted by the line bundle $\cO(1)$,
\[
\til \psi_{zt}, \psi_{z\zbar}\in \Gamma\left(\RR \times \PP^1 \, , \, K_{\PP^1} \otimes \cO(1) \otimes V^*\right)
\]
\end{itemize}

Additionally, we must fix the boundary behavior of the ghost field $\chi$ to be compatible with the above choices.
Again, it is best to replace $\chi$ by the one-form field $\til \chi = \d z \wedge \chi$. 
We require that the one-form ghost $\til \chi$ acquire a simple pole at $\infty \in \PP^1$, just like the $\til \psi$-field:
\[
\til\chi \in C^\infty\left(\RR \times \PP^1 \, , \, K_{\PP^1}(1) \otimes V\right) .
\]

This choice of a boundary condition results in the following special property of the theory upon compactification along $\PP^1$.

\begin{lem}\label{lem:compactP1}
The compactification of the theory along 
\[
\begin{tikzcd}
\RR \times \PP^1 \ar[d] \\ \RR
\end{tikzcd}
\]
results in the trivial one-dimensional theory. 
\end{lem}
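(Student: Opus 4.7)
The plan is to identify the compactified one-dimensional theory by computing the cohomology of the holomorphic line bundles that the twisted fields become sections of along $\PP^1$, and to show that all of these cohomology groups vanish.

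First, I would observe that the twisted theory is topological along $\RR_t$ and holomorphic along $\CC_z$, so compactification along the fiber $\PP^1$ of $\RR \times \PP^1 \to \RR$ can be performed by taking the Dolbeault cohomology $H^{0,\bullet}_{\dbar}(\PP^1, E)$ of the holomorphic bundle $E$ on $\PP^1$ in which each field takes values. The resulting graded vector space is precisely the space of fields of the compactified one-dimensional theory. The three bundles one must consider, as prescribed by the boundary conditions at $\infty \in \PP^1$, are $\cO(-1) \otimes V$ (for $\phi$), $K_{\PP^1} \otimes \cO(1) \otimes V^*$ (for $\til\psi$), and $K_{\PP^1} \otimes \cO(1) \otimes V$ (for $\til\chi$).

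Second, I would compute these cohomologies directly. Since $K_{\PP^1} = \cO(-2)$, the bundle $K_{\PP^1} \otimes \cO(1)$ is isomorphic to $\cO(-1)$. Thus all three fields take values (up to tensoring with $V$ or $V^*$) in $\cO(-1)$. A standard computation for $\PP^1$ yields
\[
H^0(\PP^1, \cO(-1)) = 0, \qquad H^1(\PP^1, \cO(-1)) = 0,
\]
the first because $\cO(-1)$ has negative degree and admits no nonzero global sections, and the second by Serre duality (or directly from the \v{C}ech complex for the standard two-chart cover). Hence each of the three fields $\phi$, $\til\psi$, $\til\chi$ contributes no surviving modes upon compactification.

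Third, I would conclude. Since the BV complex of fields of the compactified theory is obtained by tensoring the Dolbeault complexes above with the complex of fields along $\RR_t$, and each factor on the $\PP^1$ side is cohomologically trivial, the total compactified complex is acyclic. Equivalently, the one-dimensional theory on $\RR_t$ obtained by compactification has no nonzero fields and is the trivial theory.

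The main obstacle is making the passage from three-dimensional field theory to Dolbeault cohomology on $\PP^1$ fully rigorous, particularly the verification that the boundary conditions stated for the fields (the explicit twists by $\cO(\pm 1)$) really implement the prescribed behavior at $\infty$ and assemble into a well-defined BV theory on the compactification $\RR \times \PP^1$. Given that framework, however, the cohomology vanishing on $\PP^1$ is automatic, which is why $\cO(-1)$ was chosen in the first place: it is the unique line bundle on $\PP^1$ whose total cohomology vanishes, making it the natural building block for producing a trivial compactification and hence an augmentation of the bulk operator algebra.
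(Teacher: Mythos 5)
Your proof is correct and rests on the same essential fact as the paper's: the only line bundle appearing after imposing the boundary conditions is $\cO(-1)$ (since $K_{\PP^1}\otimes\cO(1)\cong\cO(-1)$), and its cohomology on $\PP^1$ vanishes in both degrees. The difference is one of packaging. The paper argues sector by sector at the level of solutions to the equations of motion: $H^0(\PP^1,\cO(-1))=0$ kills $\phi$ and the ghost $\til\chi$ directly, while for the one-form $\til\psi$ it appeals to the earlier Poincar\'e-lemma argument that closed partial two-forms are gauge-trivial, together with the assertion that the gauge transformations preserve the simple-pole condition at $\infty$. You instead compute the full Dolbeault cohomology $H^{0,\bullet}_{\dbar}(\PP^1,\cO(-1))$ once and apply it uniformly; in particular your use of $H^1(\PP^1,\cO(-1))=0$ (via Serre duality or \v{C}ech) is a slightly tighter treatment of the $\psi$-sector, since it directly guarantees that the trivializing gauge parameter exists \emph{with} the prescribed pole behavior at $\infty$, rather than inferring this from the affine-space argument. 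Both routes buy the same conclusion; yours makes the role of $\cO(-1)$ as the unique totally acyclic line bundle on $\PP^1$ explicit, which is indeed the reason this boundary condition produces an augmentation. Your caveat about rigorously assembling the boundary conditions into a BV theory on $\RR\times\PP^1$ is fair, and the paper does not address it either.
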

\begin{proof}
The point is that every solution to the equations of motion is gauge equivalent to the trivial solution. 
For the scalar $\phi$, notice that the equations of motion require it to be constant along $\RR$ and holomorphic $\dbar \phi = 0$. 
Since there are no global holomorphic sections of the sheaf $\cO(-1)$ on $\PP^1$, we see that $\phi$ must be the trivial solution. 

The remaining equations of motion are
\[
\frac{\partial}{\partial \zbar} \psi_t - \frac{\partial}{\partial t} \psi_\zbar = 0 ,
\]
which is equivalent to $\d \til \psi = 0$, where $\til \psi = \d z \wedge \psi$. 
We have already argued that before imposing boundary conditions that every pair of solutions to this equation are gauge trivial. 
Since the gauge transformations preserve the condition that $\psi_t, \psi_{\zbar}$ have a simple pole at $\infty$ this remains true even after imposing the boundary conditions. 

It remains to see that there are non trivial degrees of freedom present for the ghost $\til\chi = \d z \wedge \chi$. 
Recall that $\til\chi$ must be constant along $\RR$.
The remaining equations of motion together with our boundary condition implies $\til \chi$ is a holomorphic section of $K_{\PP^1}(1)$, but there are none. 
\end{proof}

If $\Obs(\PP^1)$ denotes the observables of the compactified theory, the above result implies that $\Obs(\PP^1) \simeq \CC$. 
The local-to-global map from local operators to operators in the compactified theory
\[
\cA_0 \to \Obs(\PP^1) \simeq \CC
\]
is precisely the projection onto the component ${\rm S}^0(V^*_{(0)})$. 

\subsection{Dimensional reduction}

Before moving on to a more interesting example, we take a small expository detour and demonstrate that one can obtain Koszul dual algebras from a rather different point of view. We will expand on this second point of view, and its connection to our main exposition, in \S \ref{s:conclusions}. On a first readthrough, the reader may wish to skip this analysis and jump directly to \S \ref{s:crit}. 

Consider, for a moment, the restriction of this free theory to the submanifold 
\[
\RR \times (\RR_{>0} \times S^1) \simeq \RR \times \CC^\times \subset \RR \times \CC .
\]
We will contemplate the dimensional reduction along $S^1$ factor and the resulting two-dimensional field theory on $\RR \times \RR_{>0}$. 
We will continue to denote the coordinate on the first $\RR$ by $t$, and will denote the coordinate on $\RR_{>0}$ by the circle radius $r$.  

For the naive dimensional reduction, we remember just the lowest lying $S^1$ modes. 
The resulting free theory simply consists of fields
\[
\phi_0(t,r) \in C^\infty(\RR \times \RR_{>0} , V) ; \quad \psi_0 (t,r) \in \Omega^1 (\RR \times \RR_{>0}, V^*) 
\] 
with action $\int_{\RR\times \RR_{>0}} \phi_0 \d \psi_0$.
There is, in addition, a topological gauge symmetry $\delta \psi_0 = \d \chi_0$. 

Of course, we have missed out on all of the higher $S^1$ modes present in the three-dimensional theory. 
More accurately what we should do is expand the fields $\phi, \psi$ into their Fourier modes around the circle.
We will then obtain a two-dimensional theory on $\RR \times \RR_{>0}$ with an infinite number of fields each labeled by their Fourier mode around the circle. 

Writing the complex coordinate as $z = r e^{i \theta}$ we obtain a decomposition of the three-dimensional fields on $\RR \times \CC^\times$:
\begin{align*}
\phi(t,z) & = \sum_{n \in \ZZ} \phi_n (t,r) e^{2 \pi i n \theta} \\
\psi(t,z) & = \frac{1}{2 \pi i z} \sum_{n \in \ZZ} \psi_n(t, r) e^{2 \pi i n \theta} .
\end{align*}

If we expand the free action, we see that the theory on $\RR \times \RR_{>0}$ has as its fields 
\[
\{\phi_n(t,r)\}_{n \in \ZZ} \in C^\infty(\RR \times \RR_{>0} , V) ; \quad \{\psi_n (t,r)\}_{n \in \ZZ} \in \Omega^1 (\RR \times \RR_{>0}, V^*) .
\] 
with free action
\[
\sum_{n \in \ZZ} \int_{\RR \times \RR_{>0}} \bigg(\phi_n \d \psi_{-n} + n \phi_n \psi_{-n} (r^{-1} \d r)  \bigg).
\]

Without the second term, this free action can be thought of as the topological $B$-model with target the algebraic loop space $L V = V[z,z^{-1}]$.
The second term includes a background gauge field for the symmetry of loop rotations.

There is an overall gauge symmetry of this dimensionally reduced action by an integers' worth of fields $\{\chi_n (t,r)\}_{z \in \ZZ}$ which acts via $\delta \psi_n (t,r) = \d \chi_n (t,r)$, where $\d = \d t \,\partial_t + \d r \,\partial_r$ is the two-dimensional de Rham operator. 

So far we have provided a description of the $S^1$ reduction of the theory on $\RR \times \CC^\times$ to $\RR \times \RR_{>0}$. 
The fact that this theory was defined on the affine space $\RR \times \CC$ determines a natural boundary condition for this dimensional reduction at $r = 0$, thus extending it to a theory on $\RR \times \RR_{\geq 0}$. 
This boundary condition reads
\begin{equation}\label{eqn:bc1}
\phi_n (t, 0) = 0 ; \quad \psi_{n+1} (t,0) = 0 \quad {\rm for} \; n < 0 .
\end{equation}
We must also require that the ghost field satisfy a compatible boundary condition:
\begin{equation}\label{eqn:gbc1}
\chi_{n+1} (t,0) = 0, \quad {\rm for} \; n < 0 .
\end{equation}

We observe that the corresponding operators at $t = r = 0$ of this dimensionally reduced theory agree precisely with the local operators $\Obs_0$ we found from the three-dimensional point of view above. 
Indeed, they are freely generated, up to gauge equivalence, by the even local operators $\{\phi_n\}_{n \geq 0}$ and the odd local operators $\{\chi_n\}_{n > 0}$. 
In the notation of the previous section $\phi_n$ corresponds to a scalar multiple of $\partial_z^n \phi$ and $\chi_n$ a scalar multiple of~$\partial^{n-1}_z \chi$. 

To summarize, imposing the boundary conditions (\ref{eqn:bc1}), (\ref{eqn:gbc1}) on the theory defined on the half-space $\mathbb{R} \times \mathbb{R}_{\geq 0}$ results in an algebra of boundary local operators given by (\ref{eqn:freeops}), the bulk line algebra in the original theory.  

\subsection{A ``transverse'' boundary condition}
We define another boundary condition of this dimensionally reduced two-dimensional theory which is naturally `dual' to the one we have just discussed. We will be a bit more precise about what we mean by this in \S \ref{s:conclusions} but, as we will see, the algebra of boundary local operators on this new boundary condition will reproduce the Koszul dual algebra (\ref{eq:freedual}).

In fact, we will actually construct this boundary condition at the level of the three-dimensional theory on $\RR_t \times \CC_z$ at $z = \infty$. 
This amounts to extending the theory on affine space $\RR \times \CC$ to a theory on $\RR \times \PP^1$ by fixing the behavior for the fields at $z = \infty$. 

The boundary conditions we choose are the following:
\begin{itemize}
\item the scalar $\phi$ vanishes at $\infty \in \PP^1$. 
That is $\phi$ is actually a section of the line bundle $\cO(-1)$ \footnote{More accurately this should be written as $\pi^* \cO(-1)$ where $\pi \colon \RR \times \PP^1 \to \PP^1$ is the projection.}, 
\[
\phi \in \Gamma \left(\RR \times \PP^1 \, , \, \cO(-1) \otimes V \right) 
\]
\item the components of the one-form $\psi_{t}, \psi_{\zbar}$ have simple poles at $\infty \in \PP^1$.
That is, $\psi_t,\psi_{\zbar}$ are twisted by the line bundle $\cO(1)$,
\[
\psi_{t}, \psi_{\zbar}\in \Gamma\left(\RR \times \PP^1 \, , \, \cO(1) \otimes V^*\right)
\]
\end{itemize}

Additionally, we must fix the boundary behavior of the gauge field $\chi$ to be compatible with the above choices.
A natural way to do this is to require the gauge field to acquire a simple pole at $\infty \in \PP^1$, just like the $\psi$-fields:
\[
\chi \in C^\infty\left(\RR \times \PP^1 \, , \, \cO(1) \otimes V\right) .
\]

This choice of a boundary condition results in the trivial 1d theory upon compactification along $\PP^1$, by the argument given above.

Let us translate this boundary condition to the $S^1$-reduction of the three-dimensional theory to $\RR \times \RR_{>0}$. 
This boundary condition determines the behavior of the two-dimensional fields as $r \to \infty$, which we read off to be 
\begin{equation}\label{eqn:bc2}
\phi_n (t, \infty) = 0 ; \quad \psi_{n+1} (t, \infty) = 0 \quad {\rm for} \; n \geq 0 .
\end{equation}
We must also require that the gauge field satisfy a compatible boundary condition:
\begin{equation}\label{eqn:gbc2}
\chi_{n+1} (t,0) = 0, \quad {\rm for} \; n \geq 0 .
\end{equation}

We can compute the gauge equivalence classes of local operators $\Obs_\infty$ at $r = \infty$ in a very similar way as we did for $r = 0$. 
The algebra is freely generated by the even local operators $\{\phi_n\}_{n < 0}$ and the odd local operators $\{\chi_n\}_{n \leq 0}$. 
In other words
\[
\Obs_\infty = \bigotimes_{n < 0} {\rm S}^\bu \big(V^*_{(n)}\big) \otimes \bigotimes_{n \leq 0} \wedge^\bu \big(V_{(m)}\big)
\]
which we can clearly identify as the Koszul dual algebra to $\Obs_0$, as anticipated.

It is also instructive to compare the characters of local operators at $r = 0, \infty$. 
We have already computed the character $\chi_0(q,s)$ of the algebra $\Obs_0$ above. 
At $r=\infty$ the character reads 
\[
\chi_\infty (q,s) = \prod_{n \geq 0} \frac{1 - s^{-1}q^{-n}}{1 - s q^{-n-1}}
\]
Notice that $\chi_{\infty}(q, s)$ converges in the opposite regime of $\chi_0(q, s)$ (the latter converging in the unit disc $|q|<1$), which is natural since their parent boundary conditions were defined respectively at $z=0, \infty$. To relate the two expressions, we employ the q-Pochhammer inversion formula, which reads $(z; p)_{\infty} = {1 \over (q z; q)_{\infty}}, \  p:= q^{-1}$. Application of this identity then gives that they are reciprocal as q-series: $\chi_0(q, s) = (\chi_{\infty}(q, s))^{-1}$. The fact that these characters multiply to the identity, representing the vacuum, is no accident. The two boundary conditions are \textit{transverse}: that is, they set to zero complementary ``halves'' of the degrees of freedom in the theory, though both boundary conditions preserve the vacuum corresponding to the choice of augmentation, $\CC_{\epsilon}$. 
This is in fact a general feature of Koszul dual algebras, and we will expand on the point of view that Koszul dual algebras are supported on transverse boundary conditions in \S \ref{s:conclusions}. It would be instructive to compute Koszul dual pairs of characters in more general theories, for which (by standard cohomological arguments applied to supersymmetric indices) it is generally sufficient to study free theories \footnote{We thank K. Costello for a discussion of this point.}.

We conclude by noting that it is already instructive to think about a degenerate limit of $S^1$ reduced theory. 
If we just remember the {\em lowest} lying $S^1$-modes, we have see that observables at $r=0$ are simply the free algebra on the even operator $\phi_0$, thus the subalgebra of lowest lying modes of the algebra $\Obs_0$ is
\[
\Obs_0^{(0)} = {\rm S}^\bu(V^*).
\]

On the other hand, at $r=\infty$ the lowest lying linear operator that survives is the {\em odd} operator $\chi_0$.
So, the subalgebra of lowest lying modes of the algebra $\Obs_\infty$ is
\[
\Obs_\infty^{(0)} = \wedge^\bu(V) .
\]

We recognize the prototypical example of Koszul duality for free algebras in $\Obs_0^{(0)}$ and $\Obs_\infty^{(0)}$.

\section{Koszul duality in 3-dimensional gauge theory}\label{s:crit}

Our second example will be another example of a theory that arises from a holomorphic-topological twist of 3d $\cN=2$ gauge theory. 
We treat only the pure gauge theory case; coupling to chiral multiplets (as discussed in the previous section) can be handled similarly. 
We will see that, unlike in the case of the free twisted chiral multiplet, the algebra of local operators will acquire nontrivial quantum corrections that are essential for Koszul duality to be implemented. 

When $\fg$ is semi-simple, one can interpret this holomorphic-topological twist as Chern-Simons theory at the ``critical'' level $k = h^{\vee}$, where $k > 0$ is the bare level, $h^{\vee}$ is the dual Coxeter number of the Lie algebra, and $k - h^{\vee}$ is the Chern-Simons level incorporating the one-loop shift from integrating out the vector multiplet gauginos. 

Physically, we begin with a vector multiplet with components $(A_{\mu}, \sigma, \lambda_{\pm}, \bar{\lambda}_{\pm})$ with canonical R-charge and spin quantum numbers, and will consider the components of the gauge field on $\RR_t \times \CC_{z}$ (though as explained in \cite{ACNV} the twisted construction to follow may be defined on any 3-manifold with a transverse holomorphic foliation). The gauge field takes values in some compact gauge group, though for the purposes of studying local operators we will only need to specify the Lie algebra $\fg$ in what follows. The vector multiplet also includes a real adjoint-valued scalar, $\sigma \in \fg_{\RR}$, and $\fg$-valued gauginos. 

With respect to the twisting supercharge $\bar{Q}_+$, it is easy to check that we have the following $\bar{Q}_+$-closed combinations:
\begin{align}
A & = A_{\zbar} (z,t) \d \zbar + A_{t} (z, t) \d t \\
\cA_t &= A_t - i \sigma.
\end{align}
The curvature of the connection 1-form in the first line is given by
\[
\cF_{\zbar t} := i  [(\partial_{\zbar} - i A_{\zbar}), \partial_t - i \cA_t] = F_{\zbar t} - i D_{\zbar}\sigma.
\] It is straightforward to check that $\overline{\cF_{\zbar t}}$ is also $\bar{Q}_+$-closed modulo the Dirac equation for the gauginos. 
The computation of the descendants of the local operators will proceed similarly to our previous example, except that now we use the covariantized descent equation:
\[
\bar{Q}_+ \cO^{(1)} = d_{A} \cO
\] with $d_A := (\partial_{\zbar} - i A_{\zbar})d\zbar + (\partial_t - i \cA_t)dt$. 

We will pass now to the twisted theory, where we write the theory in the first-order formalism and discard the $\bar{Q}_+$-exact terms. We will slightly relabel the relevant fields for Chern-Simons theory at the critical level. 

The theory has two sets of fields
\begin{align*}
A & = A_{\zbar} (z,t) \d \zbar + A_{t} (z, t) \d t \\
B & = B (z,t) \d z 
\end{align*}
with $A_{\zbar}, A_t$ $\fg$-valued functions on $\RR^3$ and $B(z,t) \in \fg^*$ is co-adjoint valued. If one rewrites the standard Yang-Mills action in the first-order formalism, one can identify $B_z \leftrightarrow {1 \over g^2} \overline{\cF_{\zbar t}}$ on-shell. The addition of a Chern-Simons term will modify this identification.

In terms of these fields, the twisted action can be written as a ``BF'' theory:
\begin{align*}
\int \; B \wedge F_A & = \int \; B \wedge \d A +  \frac12 \int \; B \wedge [A,A] \\
& = \int \left(B \partial_t A_{\zbar} + B \partial_{\zbar} A_t + B [A_t, A_\zbar]\right) .
\end{align*}
Here, we have implicitly used the canonical linear pairing between $\fg$ and its coadjoint module $\fg^*$. 
Notice that in the kinetic part of the action only $\partial_t$ and $\partial_{\zbar}$ derivatives appear since $B$ is a Dolbeault form of type $(1,0)$. At general Chern-Simons level, this action would be augmented by a Chern-Simons term using the partial connection: ${i \kappa \over 8 \pi} \int A \del A$ where $\kappa$ is the level.
Here $\del = \d z \partial_z$ is the holomorphic de Rham operator along $\CC$. 

\subsection{Gauge invariant local operators}

The Lie algebra of gauge symmetries is the familiar one: $C^\infty(\RR^3) \otimes \fg$, i.e. smooth functions valued in $\fg$. 
An element $\fc \in C^\infty(\RR^3) \otimes \fg$ acts on the fields via the transformation rule
\begin{align*}
\delta A^a_\zbar & = \partial_{\zbar} \fc^a + f^a_{bc} \fc^b A^c_\zbar \\
\delta A^a_t & = \partial_t \fc^a + f^a_{bc} \fc^b A^c_t  \\
\delta B_a & = f^c_{ab} \fc^b B_c .
\end{align*}

Forgetting the terms in the BRST operator coming from the Lie algebra structure on $\fg$, we can argue as in the last section that any local operator supported at $(z,t) = (0,0)$ which depends on $\partial^n_{\zbar} \fc^a$, $\partial^n_t \fc^a$, $\partial^n_{\zbar} B_a$, or $\partial^n_t B_a$ is BRST exact. 
Thus, only operators depending on $\partial_z^n \fc^a$ and $\partial^n_z B_a$ survive to BRST cohomology.

Since we are dealing with an interacting field theory, there is the possibility for quantum corrections.  
Because of this, we introduce the perturbative parameter $\hbar$. 
The quantum BRST algebra of local operators is defined over the ring $\CC[[\hbar]]$, which we will denote by $\cA$ in this section. 

As a graded algebra, the classical BRST complex $\cA / \hbar$ is a free polynomial algebra on an infinite number of generators
\[
\CC[\fc^a, \partial_z \fc^a , \partial^2_z \fc^a, \ldots, B_a , \partial_z B_a , \partial_z^2 B_a, \ldots], \qquad {a=1, \ldots, \dim \fg} 
\]
where $\partial_z^n \fc^a$ is of ghost number $+1$ and $\partial_z B_a$ is of ghost number $0$. 
The classical BRST operator take a similar form to the usual Chevalley--Eilenberg differential for the Lie algebra $\fg$ with coefficients in ${\rm S}^\bu (\fg)$. 
Indeed, it is the $\CC[[z]]$-linear extension of this operator.
The formula is
\[
Q_0 = \sum_{n +m \geq 0} \left(f_{ab}^c \partial_z^m \fc^a \partial_z^n \fc^b \frac{\delta}{\delta \partial_z^{n+m}\fc^c} + f_{ac}^b \partial_{z}^m \fc^a \partial_z^n B_b \frac{\delta}{\delta \partial_z^{n+m} B_c}\right) .
\] 
Notice that when restricted to the modes $\partial_z^n \fc^a, \partial_z^n B_a$ with $n=0$, this is the ordinary BRST differential. 

As alluded to, the BRST complex of local operators is equivalent to the Chevalley--Eilenberg complex for a certain graded Lie algebra.
The graded Lie algebra is
\beqn\label{eqn:grLie}
\fg[[z]] \oplus \fg^*[[z]] [1] .
\eeqn
This means that in degree zero there is the Lie algebra of currents $\fg[[z]]$ and in cohomological degree $-1$ there is the $\fg[[z]]$-module $\fg^*[[z]]$. 
Here, the module structure is simply the contragadient one where we also remember to add powers of $z$.
In bases, this reads
\[
[e_a z^n , e^b z^m] = z^{n+m} f_{ac}^{b} e^c  .
\]
The concise formula for the classical BRST algebra is, then
\[
\cA / \hbar = \clie^\bu \bigg(\fg[[z]] \oplus \fg^*[[z]] [-1] \bigg) .
\] 

Because $\fg[[z]]$ acts on $\fg^*[[z]][-1]$, this can be written in another way as a complex computing Lie algebra cohomology with coefficients in a module:
\[
\cA / \hbar \cong \clie^\bu \bigg(\fg[[z]] \, ; \, \Sym \left(z^{-1} \fg[z^{-1}]\right) \bigg) .
\]
In this identification, we have used the isomorphism between the (continuous) dual of power series $\CC[[z]]$ with the polar half of Laurent polynomials $z^{-1} \CC[z^{-1}]$ defined by the residue pairing. 
The $\fg[[z]]$ action on $z^{-1} \fg[z^{-1}]$ is given by the formula $e_a z^n \cdot e_b z^{-m-1} = f_{ab}^c e_c z^{n-m-1}$ if $n-m \leq 0$ and zero otherwise. 

Quantum effects modify the classical BRST operator as in
\[
Q = Q_0 + \hbar Q_1 + \cdots .
\]
In what follows, we will work only to first order in perturbation theory. 
At the end of the section we argue that this is actually enough: there exists a quantization of this three-dimensional theory which is exact at one-loop so we can assume $Q_k = 0$ for $k \geq 2$. 

We remark on a slightly different presentation of the algebra of operators. 
If $\fg$ is semi-simple the trace pairing defines an isomorphism of $\fg$-representations $\fg \simeq \fg^*$. 
In this case, we can write the the graded Lie algebra \eqref{eqn:grLie} in a more compact form as $\fg[\ep] [[z]]$ where $\ep$ is a formal parameter of cohomological degree $-1$. 
Then, in this case we have an isomorphism
\[
\cA / \hbar \simeq \clie^\bu \left(\fg[\ep][[z]]\right) .
\]

\subsection{Coupling to line operators}

By Proposition \ref{prop:line}, there is a one-to-one correspondence between line defects along
\[
\RR \times \{z=0\} \subset \RR^3 
\]
described by an associative algebra $\cB$ and Maurer--Cartan elements in 
\begin{equation}\label{eqn:criticalobs}
\cA / \hbar \otimes \cB = \clie^\bu \bigg(\fg[[z]] \, ; \, \Sym \left(z^{-1} \fg[z^{-1}]\right) \bigg) \, \otimes  \, \cB .
\end{equation}

For now, we are treating the bulk gauge field as classical.
So, we use the classical BRST operator $Q_0$ which is simply the relevant Chevalley--Eilenberg differential.
Thus, we are studying the Maurer--Cartan equation
\begin{equation}\label{eqn:cmccs}
Q_0 \alpha + \alpha \star \alpha = 0 .
\end{equation}

\subsection{An example: minimal free fermion coupling}

Before moving on to the computation of the universal line defect, we spell out the yoga between a class of Maurer--Cartan elements as in \eqref{eqn:cmccs} and local couplings in the case that $\cB$ is the algebra of local operators of the free fermion topological mechanics system. 

The fields of the quantum mechanics system consists of pairs of fermions $(\psi^i, \chi_i)$, $i=1,\ldots N$. 
In this case, the algebra of local operators $\cB$ is isomorphic to the Clifford algebra $\cC_{\RR^N} = \cC_N$ on generators $\psi^1,\ldots, \psi^N, \chi_1,\ldots \chi_N$. 
They satisfy the anticommutation relations 
\[
\psi^i \chi_j + \chi_j \psi^i = \hbar\, \delta^i_j .
\]

For now, let us treat the bulk gauge theory as classical. 
Then, couplings are given by Maurer--Cartan elements in 
\[
\cA / \hbar \otimes \cC_{N} .
\]
The most general degree $+1$ local operator in this algebra is of the $\alpha = \sum_n \alpha[n]$ where 
\[
\alpha [n] = \frac{1}{n!}\rho_{a,i}^{j} [n] \partial_z^n \fc^a \psi^i \chi_j 
\]
for some collection of coefficients $\{\rho_{a,i}^{j}[n]\}$. 
Here, $n$ can be any non-negative integer, $i,j=1,\ldots N$ and $a = 1,\ldots, \dim \fg$. 

Suppose that $\rho_{a,i}^j [n] = 0$ for $n \geq 1$. 
Since we are treating the bulk gauge theory as classical, the Maurer--Cartan equation \eqref{eqn:cmccs} reads
\beqn\label{eqn:mcz0}
f_{ab}^c \rho_{c,i}^{j} [0] \fc^a \fc^b \psi^i \chi_j + (\rho_{ak}^i [0] \rho_{bj}^k [0] - \rho_{aj}^k [0] \rho^i_{bk} [0]) = 0.
\eeqn
This computation is completely analogous to the one we did in \S \ref{sec:gaugealg} for the case of the Weyl algebra.
It is equivalent to the condition that $\alpha[0]$ define a map of Lie algebras 
\[
\alpha[0] \colon \fg \to \End(\RR^N) . 
\]
In other words, $\fg$ is represented on $\RR^N$. 

Let's go one step further. 
Assume that $\rho_{a,i}^j [n] = 0$ only for $n \geq 2$. 
In this case, in addition to \eqref{eqn:mcz1}, which involves only $\alpha[0]$, there are the additional terms in the Maurer--Cartan equation which requires
\begin{align}
\label{eqn:mcz1}
f_{ab}^c \rho_{c,i}^{j} [1] + (\rho_{ak}^i [0] \rho_{bj}^k [1] - \rho_{aj}^k [1] \rho^i_{bk} [0]) & = 0 \\
\rho_{ak}^i [1] \rho_{bj}^k [1]  & = 0 .
\label{eqn:mcz3}
\end{align}
Together, \eqref{eqn:mcz0}--\eqref{eqn:mcz3} imply that $\alpha[0] + \alpha[1]$ comprise a representation for the Lie algebra $\fg \otimes \CC[[z]]/z^2 \simeq \fg \ltimes \fg$. 

In general, one can extend the computations above to see that the full Maurer--Cartan equation requires that the coupling $\alpha = \sum_n \alpha[n]$ comprises a representation for the infinite-dimensional Lie algebra $\fg[[z]]$.

To construct the Lagragian coupling, we solve the descent equations $\partial_t \cO^{(0)} + Q \cO^{(1)} = 0$, where $\alpha = \cO^{(0)}$. 
Following the general discussion of topological descent in \S \ref{sec:descent}, we first construct an endomorphism $\hat{Q}$ acting on local operators which trivializes infinitesimal time translations: $\{Q, \hat{Q}\} = \partial_t$. 

In a way that's completely parallel to the construction of couplings in the case of a free theory, from $\alpha = \sum \alpha[n]$ we are led to the Lagrangian coupling
\[
\sum_{n \geq 0} \int_{\RR \times \{z=0\}} \frac{1}{n!} \rho_{a,i}^j[n] \partial_z^n A_t \chi_j \psi^i .
\]

\subsection{Tree-level Koszul duality and the universal line defect}

In this section we characterize the universal line defect of the 3-dimensional gauge theory.
We take the line to live along $\{z=0\} \subset \RR_t \times \CC_z$.

Similar to the case of the free theory in the previous section, the most general Lagrangian couplings take the form
\beqn\label{eqn:3dgaugeLag}
\sum_{n \geq 0} \int_{\RR \times \{z=0\}} \frac{1}{n!} J_a [n] \partial_z^n A^a_t + \sum_{n \geq 0} \int_{\RR \times \{z=0\}} \frac{1}{n!} K^a[n] \partial_z^n B_{t,a} .
\eeqn
Here $A_t$ is the $\d t$-component of the connection one-form and $B_t$ is the $\d t$-descendant of the field $B$. 
On the line, $B_{t}$ imposes the equation of motion $\partial_t B = 0$. 
We are ignoring the couplings of operators to the field $B$ and the ghost $\fc$ since they will not contribute to the cohomology of local operators on the line. 

The main goal of this section is to compute the universal line defect to tree-level in the perturbation parameter. 

\begin{prop}
At tree-level, the Koszul dual to the algebra of operators of the $3d$ gauge theory on $\RR \times \CC$ is the dg algebra generated by elements $J_a[k],K^a [\ell]$, $k,\ell \geq 0$, $a = 1,\ldots, \dim \fg$ of cohomological degree zero and degree $+1$, respectively, satisfying the relations
\begin{align*}
[J_a [k] , J_b [\ell]] & = f_{ab}^c J_{c} [k+\ell] \\
[J_a [k], K^b[\ell]] & = f_{ac}^b K^c [k+\ell] .
\end{align*} 
\end{prop}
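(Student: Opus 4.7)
The strategy is to recognize the classical BRST algebra as the Chevalley--Eilenberg cochain complex of an infinite-dimensional graded Lie algebra, and then invoke the general principle that Koszul duality exchanges $\clie^\bu$ with the universal enveloping algebra $U$.

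First, I would use the identification already established in the body of the section: the classical algebra of local operators is $\cA/\hbar \simeq \clie^\bu(\fL)$ where $\fL = \fg[[z]] \oplus \fg^*[[z]][-1]$, viewed as a graded Lie algebra with zero differential, with Lie bracket $[e_a z^k, e_b z^\ell] = f_{ab}^c e_c z^{k+\ell}$ on the $\fg[[z]]$ part and the coadjoint action of $\fg[[z]]$ on $\fg^*[[z]]$ encoding the remaining bracket. Under the grading convention $A[n]^i = A^{n+i}$, elements of $\fg^*[[z]][-1]$ sit in cohomological degree $+1$, matching the desired degree of $K^a[\ell]$.

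Next, I would apply the generalization of the computation from \S\ref{s:review}, namely $(\clie^\bu(\fL))^! \simeq U(\fL)$. Morally, this is the biduality statement dual to $(U\fh)^! \simeq \clie^\bu(\fh)$ established in the review, but proved directly by constructing an appropriate Koszul-type free resolution of the augmentation module $\CC_\ep$ over $\clie^\bu(\fL)$: take $\bigl(\clie^\bu(\fL) \otimes U(\fL), d_K\bigr)$ whose differential combines the Chevalley--Eilenberg coderivation with right multiplication in $U(\fL)$. Computing $\RHom_{\clie^\bu(\fL)}(\CC_\ep, \CC_\ep)$ via this resolution recovers $U(\fL)$ as the Koszul dual.

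Finally, the PBW theorem for graded Lie algebras presents $U(\fL)$ as the algebra generated by a basis of $\fL$ subject to the super-commutator relation $xy - (-1)^{|x||y|}yx = [x,y]_\fL$. Identifying $J_a[k]$ with $e_a z^k \in \fg[[z]]$ and $K^a[\ell]$ with $e^a z^\ell \in \fg^*[[z]][-1]$, the bracket on $\fg[[z]]$ translates directly to $[J_a[k], J_b[\ell]] = f_{ab}^c J_c[k+\ell]$, and the coadjoint action $e_a z^k \cdot e^b z^\ell = f_{ac}^b e^c z^{k+\ell}$ becomes $[J_a[k], K^b[\ell]] = f_{ac}^b K^c[k+\ell]$. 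The $K$'s graded-commute freely amongst themselves because $\fg^*[[z]][-1]$ is abelian as a Lie subalgebra of $\fL$.

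The main technical obstacle is that $\fL$ is infinite-dimensional, so the biduality assumption of the review (e.g.\ finite-dimensional cohomology in each degree) does not literally apply. One must take a completed/graded-topological version of $\clie^\bu(\fL)$ reflecting the $z$-adic topology on $\fg[[z]]$ and dually on $\fg^*[[z]]$, together with a suitable completion of $U(\fL)$, in order to make the quasi-isomorphism $(\clie^\bu(\fL))^! \simeq U(\fL)$ rigorous. Modulo these topological refinements, the argument is entirely structural and reduces to the graded enveloping-algebra computation outlined above. The caveat ``at tree level'' in the proposition also reminds us that this identification holds using only $Q_0$; possible quantum corrections to the BRST operator will be addressed separately.
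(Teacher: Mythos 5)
Your proposal is correct, but it takes precisely the route the paper acknowledges and then explicitly declines to follow. You identify $\cA/\hbar$ with $\clie^\bu\big(\fg[[z]]\oplus\fg^*[[z]][-1]\big)$ and invoke the general duality between Chevalley--Eilenberg cochains and the universal enveloping algebra, reading off the relations from the graded Lie bracket via PBW. The paper states exactly this in two sentences as the reason the proposition is ``no surprise,'' but then says ``Rather than relying on this general mathematical result about Koszul duality, we will deduce the relations of the Koszul dual algebra directly,'' and proceeds physically: it expands the path-ordered exponential of the defect coupling $\sum_n \tfrac{1}{n!}J_a[n]\,\partial_z^n A^a_t$, computes its BRST variation using $\delta A^a_t = \partial_t\fc^a + f^a_{bc}\fc^b A^c_t$, integrates the $\partial_t\fc$ term by parts to produce collision/commutator boundary terms, and demands that these cancel the $f^a_{bc}\fc^b A^c_t$ term --- equivalently, cancellation of the gauge variation of the two tree diagrams in Figure~\ref{fig:cancel1} --- then extracts $[J_a[k],J_b[\ell]]=f_{ab}^c J_c[k+\ell]$ by inserting the test fields $A_t = e_b z^\ell\delta_{t=0}$, $\fc = e_a z^k$, with the $[J,K]$ relation derived analogously. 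What each approach buys: your structural argument is the honest mathematical proof, modulo the completion issues you rightly flag for the infinite-dimensional current algebra; the paper's diagrammatic derivation is the one that generalizes to the subsequent subsection, where the one-loop anomaly forces the differential $\d J_a[n]=\sum_{r+s=n-1}f_{ac}^b J_b[r]K^c[s]$ --- a quantum correction invisible to the tree-level abstract argument. Your degree bookkeeping (the $K$'s in degree $+1$ because the $B$-operators sit in ghost number $0$) and the identification of the $[J,K]$ relation with the coadjoint action both match the paper's stated dictionary $J_a[n]=z^n e_a$, $K^a[n]=\ep z^n e^a$.
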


From a mathematical standpoint this proposition is no surprise given our recollections on Koszul duality. 
Recall that the algebra of classical observables of the 3d gauge theory is the Lie algebra cohomology:
\[
\clie^\bu \bigg(\fg[[z]] \oplus \fg^*[[z]] [-1] \bigg)  
\]
Thus, the Koszul dual of the local operators is the associative algebra
\[
U\left(\fg[[z]] \oplus \fg^*[[z]] [-1] \right) .
\]
The description in the above proposition is a generators and relations presentation for this algebra.
Explicitly, the dictionary is 
\begin{align*}
J_a[n] & = z^n e_a \\
K_a [n] & = \ep z^n e^a 
\end{align*}
where $\{e_a\}$ is a basis for $\fg$ and $\{e^a\}$ is the dual basis for $\fg^*$.

Rather than relying on this general mathematical result about Koszul duality, we will deduce the relations of the Koszul dual algebra directly. 
We will do this by demanding that terms in the path integral involving the coupling to the line defect are BRST invariant, following the discussion in \ref{sec:descent}.

Let us first consider the classical commutation relation 
\begin{equation}\label{eqn:rel1}
[J_a [k] , J_b [\ell]] = f_{ab}^c J_{c} [k+\ell] 
\end{equation}
in the classical limit of the Koszul dual algebra. 

We will see that this relation arises from demanding that the piece of the path integral involving the first term in \eqref{eqn:3dgaugeLag} is BRST invariant.  
This term contributes the following expression to the path integral
\beqn\label{eqn:pexp1}
{\rm PExp} \left(\sum_{n \geq 1} \int_{\RR_t} \frac{1}{n!} J_a [n] \partial_z^n A^a_t \right) 
\eeqn
which can be expanded as a sum of terms of the form 
\beqn\label{eqn:pexp2}
\sum_{N \geq 0} \int_{t_1 \leq \cdots \leq t_N} \prod_{i=1}^N \left(\frac{1}{n_i!} J_{a_i}[n_i]  \partial_{z_i}^{n_i} A_t^{a_i} \right)  .
\eeqn

Recall that the BRST variation of the field $A_t$ is 
\[
\delta A^a_t = \partial_t \fc^a + f_{bc}^a \fc^b A_t^c .
\]
We readily compute the BRST variation of \eqref{eqn:pexp2} as
\beqn\label{eqn:brstvar1}
\sum_{N \geq 0} \sum_{j=1}^N \int_{t_1 \leq \cdots \leq t_N} \prod_{i=1, i \ne j}^N \left(\frac{1}{n_i!} \partial_{z_i}^{n_i} J_{a_i}[n_i] A_t^{a_i}   \right)  \left(\frac{1}{n_j!}  J_{a_i}[n_i] \partial_{z_j}^{n_j} \left( \partial_t \fc^{a_i} + f_{bc}^a \fc^b A_t^c \right)  \right)
\eeqn
Let's consider the two terms $\partial_t \fc^{a_i} + f_{bc}^a \fc^b A_t^c$. 
Integrating the term $\partial_t \fc^a$ by parts picks up boundary terms, when the points $t_j,t_{j+1}$ come together as we saw in the general analysis of the path integral in \S \ref{sec:descent}. 
When the two points collide, the corresponding operators are multiplied according to the algebra structure on local operators. 
As we saw, the two types of boundary terms contribute with alternate signs. 

Notice that we have not touched the term coming from $f_{bc}^a \fc^b A_t^c$. 
Thus, in order for this piece in the path integral to be BRST invariant we must have the following relation
\[
\int_t \sum_n \frac1{n!} J_a[n] \partial_z^n \left(f_{bc}^a \fc^b A^c_t \right) = \int_t \sum_{\ell,k} \frac1{k!}  \frac1{\ell!} [J_d[k] , J_e[\ell]] \partial_z^k \fc^d  \partial_z^\ell A^e   .
\] 
This relation must hold for all input fields since it is in the path integral.
To obtain the commutation relation \eqref{eqn:rel1}, we plug in the test fields
\begin{align*}
A_t & = e_b z^{\ell} \delta_{t=0} \\
\fc & = e_a z^k .
\end{align*}

Of course, there is a  Feynman diagrammatic interpretation of this computation. It is useful, particularly when going to higher orders in $\hbar$, to translate the above computation using the Feynman rules derived from  (\ref{eqn:3dgaugeLag}). The commutation relation is equivalent to the cancellation of the gauge variation of the tree-level Feynman diagrams in Figure~\ref{fig:cancel1}.

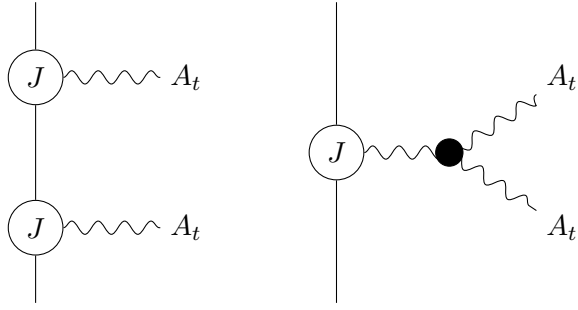
\begin{figure}
	\begin{tikzpicture}
	\begin{scope}
		\node[circle, draw] (J1) at (0,1) {$J$};
		\node[circle, draw] (J2) at (0,-1) {$J$};
		\node (A1) at (2,1)  {$A_{t}$};
		\node (A2) at (2,-1)  {$A_{t}$};
		\draw[decorate, decoration={snake}] (J1) --(A1);
		\draw[decorate, decoration={snake}] (J2) --(A2);
		\draw (0,2) -- (J1) --(J2) -- (0,-2); 
	\end{scope}	
	\begin{scope}[shift={(4,0)}];
		\node[circle, draw] (J) at (0,0) {$J$};
		\node (A1) at (3,1)  {$A_{t}$};
		\node (A2) at (3,-1)  {$A_{t}$};
		\node[circle,draw,fill=black, minimum size = 0.2pt]  (V) at (1.5,0) {};  
		\draw[decorate, decoration={snake}] (J) -- (1.5,0) --  (A1);
		\draw[decorate, decoration={snake}] (1.5,0) --(A2);
		\draw (0,2) -- (J)-- (0,-2);
	\end{scope}
	\end{tikzpicture}
	\caption{Cancellation of the gauge anomaly of these two diagrams leads to the equation for the commutation relations involving the currents $J[k]$.
	\label{fig:cancel1}}
\end{figure}

The tree-level commutation relation 
\begin{equation}\label{eqn:rel2}
[J_a [k] , K^b [\ell]] = f_{ac}^b K^{c} [k+\ell]
\end{equation}
can be derived in a completely analogous way so we omit the details. 

\subsection{Quantum corrections}

We now turn to quantum corrections to the universal line defect.
We have observed the Feynman diagrammatic interpretation of the classical, or tree-level, Koszul dual algebra. 
The quantum correction arises from anomalous diagrams present in the one-loop BRST variation of the couplings. 

The simplest one-loop Feynman diagram involving the couplings to the line defect is displayed in Figure \ref{fig:oneloop}. 
\begin{figure}
	\begin{tikzpicture}
	\begin{scope}[shift={(4,0)}];
		\node[circle, draw] (K) at (0,1) {$K$};
		\node[circle, draw] (J) at (0,-1) {$J$};
		\node (A) at (3,0)  {$A_{t}$};
		\node[circle,draw,fill=black, minimum size = 0.2pt]  (V) at (1.5,0) {};  
		\draw[decorate, decoration={snake}] (J) -- (1.5,0);
		\draw[decorate, decoration={snake}] (1.5,0) --(A);
		\draw[decorate, decoration={snake}] (K) -- (1.5,0);
		\draw (0,-2) -- (J) -- (K) -- (0,2);
	\end{scope}
	\end{tikzpicture}
	\caption{Simplest one-loop diagram involving coupling to the line defect.
	\label{fig:oneloop}}
\end{figure}
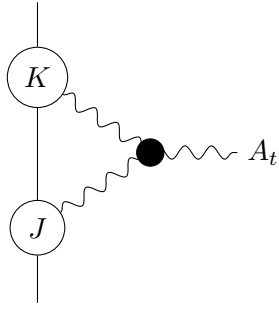

We will show that this one-loop diagram has a gauge anomaly of the form
\begin{multline}
\label{anomaly1}
\sum_{r,s} \frac{1}{(r+s+1)!} \int_{\RR \times \{z=0\}} \partial_z^{r+s+1} A_t^a f_{ac}^b J_{b} [r] K^c [s] \\ = \int_{\RR \times \{z=0\}} \partial_z A_t^a f_{ac}^b J_{b} [0] K^c [0] + \cdots .
\end{multline}
The ellipses denote similar terms present in the anomaly which involve a higher number of holomorphic $z$-derivatives.
Before moving onto the calculation of this anomaly, let's take a moment to reflect on its implications in terms of Koszul duality.

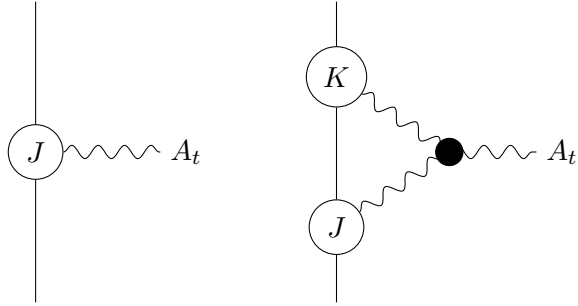
\begin{figure}
	\begin{tikzpicture}
	\begin{scope}
		\node[circle, draw] (J) at (0,0) {$J$};
		\node (A) at (2,0)  {$A_{t}$};
		\draw[decorate, decoration={snake}] (J) --(A);
		\draw (0,2) -- (J) -- (0,-2); 
	\end{scope}	
	\begin{scope}[shift={(4,0)}];
		\node[circle, draw] (K) at (0,1) {$K$};
		\node[circle, draw] (J) at (0,-1) {$J$};
		\node (A) at (3,0)  {$A_{t}$};
		\node[circle,draw,fill=black, minimum size = 0.2pt]  (V) at (1.5,0) {};  
		\draw[decorate, decoration={snake}] (J) -- (1.5,0);
		\draw[decorate, decoration={snake}] (1.5,0) --(A);
		\draw[decorate, decoration={snake}] (K) -- (1.5,0);
		\draw (0,-2) -- (J) -- (K) -- (0,2);
	\end{scope}
	\end{tikzpicture}
	\caption{Cancellation of the gauge anomaly of these two diagrams leads to the $\hbar$-linear correction to the differential $\d J[k]$.
	\label{fig:cancel3}}
\end{figure}

To cancel this gauge anomaly we must modify the algebra of operators on the universal line defect.
Classically, we have pointed out that the Koszul dual to the local operators of the bulk gauge theory is a graded associative algebra.
To cancel this gauge anomaly, we must introduce the following {\em differential} acting on this graded algebra (see Figure \ref{fig:cancel3} for the contributing diagrams):
\[
\d J_a[n] = \sum_{r + s = n-1} f_{ac}^b J_b[r] K^c[s] .
\]
If $k=0$ then $\d J_a [0] = 0$. 
The introduction of this differential gives us a consistent quantum theory. 
Indeed, if we introduce this differential into the gauge variation of the path-ordered exponential in \eqref{eqn:brstvar1} it precisely kills this one-loop gauge anomaly. 

There is a similar diagram where instead of a $J$ and a $K$ operator on the line, there are two $K$ operators, as in Figure \ref{fig:oneloop2}. 
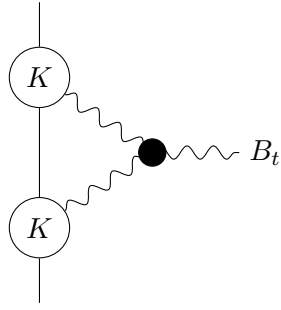
\begin{figure}
	\begin{tikzpicture}
	\begin{scope}[shift={(4,0)}];
		\node[circle, draw] (K1) at (0,1) {$K$};
		\node[circle, draw] (K2) at (0,-1) {$K$};
		\node (B) at (3,0)  {$B_{t}$};
		\node[circle,draw,fill=black, minimum size = 0.2pt]  (V) at (1.5,0) {};  
		\draw[decorate, decoration={snake}] (K2) -- (1.5,0);
		\draw[decorate, decoration={snake}] (1.5,0) --(B);
		\draw[decorate, decoration={snake}] (K1) -- (1.5,0);
		\draw (0,-2) -- (K2) -- (K1) -- (0,2);
	\end{scope}
	\end{tikzpicture}
	\caption{Another one-loop diagram with a gauge anomaly.
	\label{fig:oneloop2}}
\end{figure}
This diagram also has a gauge anomaly.
It is proportional to
\beqn
\int_{\RR \times \{z=0\}} \partial_z B_{t,a} f_{bc}^a K^b [0] K^c [0] + \cdots .
\eeqn

To cancel this gauge anomaly we must further modify the algebra of operators on the universal line defect.
We must introduce the following additional term in the differential which acts by the formula:
\beqn
\d K^a[n] = \sum_{r + s = n-1} f_{bc}^a K^b[r] K^c [s] .
\eeqn
If $k=0$ then $\d K^a [0] = 0$. 

\subsection{The computation of the one-loop anomaly}

We proceed with the calculation of the one-loop anomaly. 
First, we reflect on a particular symmetry present in the theory. 
Notice that the rescaling $z \to \lambda z$, for $\lambda \in \CC^\times$, the perturbative parameter scales as $\hbar \mapsto \lambda \hbar$. 
Thus, if a $k$-loop anomaly diagram takes as input operators involving $r$ holomorphic $\del_z$ derivatives then the anomaly will involve $(k+r)$ holomorphic $\del_z$ derivatives.

To compute the weight, we first introduce the propagator of the the holomorphic-topological theory.
For more details on the gauge condition used to construct this holomorphic-topological propagator we refer the reader to \cite{CostelloMtheory,GRWht}. 
Recall that the kinetic term in the action reads $\int_{\CC \times \RR} (B \del_t A_\zbar + B \del_\zbar A_t)$.
Thus, the propagator $P$ must satisfy the distributional equation
\[
\d z (\d t \del_t + \d \zbar \del_\zbar) P(z,t ; z',t') = \delta_{z=z', t=t'} \mathbb{1}_\fg  
\]
where $\mathbb{1}_\fg \in \fg \otimes \fg^*$ represents the identity map $\fg \to \fg$. 
An explicit solution to this equation on flat space can be constructed as follows. 
First, let 
\[
p(t,z) = \frac{\zbar \d t - t \d \zbar}{(|z|^2 + t^2)^{3/2}} .
\]
This is a one-form which is smooth away from the origin in $\CC \times \RR$. 
On $\CC \times \RR$ it can be thought of as a distributional one-form. 
The topological-holomorphic propagator on flat space is defined by
\[
P = j^* p \otimes \mathbb{1}_\fg 
\]
where $j \colon \CC \times \RR \to (\CC \times \RR)^2$ is the difference map $(z,t) \mapsto (z-z',t-t')$.
Since the propagator is symmetric across the diagonal, we will denote~$P(z,t;z',t') = P(z-z' , t-t').$

Now, let's consider the simplest case of the one-loop weight of the diagram in Figure~\ref{fig:oneloop}
\[
f^a_{bc} \int_{t_1,t_2, t, z} \d z \wedge P(t-t_1, z) \wedge P(t-t_2, z) \wedge A_{tz}^c (t,z). 
\]
This corresponds to taking as input the operators $J_a [0]$ and $K^b [0]$. 
From the paragraph above, we expect that the anomaly contains exactly one $\del_z$-derivative. 

Performing the $t_1,t_2$ integrals we can write the weight as 
\[
f^a_{bc} \int_{t, z} \d z [P'(z)]^2 \wedge A_{tz}^c (t,z)
\]
where $P'(z) = \int_{t} P(t,z) = \frac1z$ is the familiar holomorphic propagator on $\CC_z$.

The gauge variation of the weight is therefore 
\[
f^a_{bc} \int_{t, z} \d z [P'(z)]^2 \wedge \dbar A^c_t (t,z)
\]
To probe this anomaly, it suffices to input a test function of the form $A_t = z^n f(|z|) \delta_{t=0}$ where $f(|z|)$ is some radially symmetric bump function with $\int_{\CC} f(|z|) \d z \d \zbar =1$. 
By Cauchy's formula we see that the integral is nonzero if and only if $n = 1$.

More generally, we can replace $A_{tz}(t,z)$ by $\frac{1}{n!} \del_z^n A_{tz}(t,z)$ which corresponds to inputing the operators $J_a [r]$ and $K^b [s]$ with $r+s = n$. 
From this general case, we conclude that there is an anomaly for the diagram which is of the form 
\[
f_{bc}^a \sum_{r,s} \frac{1}{(r+s+1)!} \int_{t} J_a[r] K^b [s] \del_z^{r+s+1} A^c_{t} .
\]

The computation of the anomaly arising from the gauge variation of the diagram in Figure \ref{fig:oneloop2} involving two $K$-operators and the $B$-type field is completely analogous.

\subsection{Summary} 

In summary, we have found that in order for one-loop quantum gauge anomalies to vanish we must introduce a differential on the Koszul dual to the algebra of local operators.
In fact, we claim that this is an exact result, there are no higher loop anomalies!
This follows from the existence of a particular gauge which renders all two-loop and higher diagrams in the perturbative gauge theory identically zero \cite{GWcs}. 

We have shown that the quantum dg algebra takes the form
\[
\cA^! = \bigg( U\left(\fg[[z]] \oplus \fg^*[[z]] [-1] \right) \, , \, \d \bigg) .
\]
where on the $J$-generators the differential is
\[
\d J_a[n] = \sum_{r + s = n-1} f_{ac}^b J_b[r] K^c[s]  .
\]
On the $K$-generators it is
\[
\d K^a[n] = \sum_{r + s = n-1} f_{bc}^a K^b[r] K^c [s] .
\]

This quantum differential is related to a rich story in quantum groups, specifically the Yangian quantum group associated to the simple Lie algebra $\fg$. 
Costello argued in \cite{CYangian} that a certain four-dimensional partially topological, partially holomorphic gauge theory is ``controlled'' by the Yangian.
More explicitly, he shows that the Koszul dual to the algebra of local operators of the four-dimensional theory is equivalent to the (dual) Yangian as a topological Hopf algebra.  

If we place this four-dimensional theory on $\CC \times \RR \times S^1$ and compactify along $S^1$ then we precisely recover the theory on $\CC \times \RR$ studied in this section. 
A corollary of Costello's result is that the algebra of local operators of our theory is isomorphic to the dg algebra
\beqn\label{eqn:yangian1}
{\rm Hoch}_{\bu}\left( Y (\fg) \right)^\vee ,
\eeqn
where $Y(\fg)$ is the Yangian quantum group.
In words, this is the linear dual of the Hochschild {\em homology} of $Y(\fg)$. The differential is defined using the associative product on $Y(\fg)$, and the associative product on this dg algebra comes from the coproduct on $Y(\fg)$. 

Under Koszul duality, the quantum differential we find in this section becomes a quadratic relation in this Hochschild complex \eqref{eqn:yangian1}. 
This quadratic relation is rigorously understood in \cite{CYangian} as defining the Lie bialgebra structure on the classical limit of the (dual) Yangian. 

We expect, though do not prove here, that the quantum differential conspires, in fact, to produce a Koszul dual algebra equivalent to the original algebra: this gives a nontrivial example of a self-Koszul-duality. 
It is not obvious that this should be so (nor is it evident in the way we have written the algebras). 
Indeed, classically, $\cA/\hbar \neq \cA^{!}/\hbar$. 
It would be desirable to have a deeper explanation of this result.

\section{Conclusions and further directions}\label{s:conclusions}

In this note, we have explicated Koszul duality as an algebraic duality that arises when coupling a QFT to a topological quantum mechanical system. We would like to conclude by tying it to some parallel developments in the math and physics literature, as well as enumerate various open questions, works-in-progress, and future directions.

Of course, there are numerous technical questions that are raised by our presentation of Koszul duality that would be useful to clarify in future work. For example,
\begin{itemize}
\item As remarked upon in \S \ref{s:lines}, when one has a massless theory, some of our assumptions break down. For theories that do not admit a mass gap, it would be useful to understand the appropriate mathematical formulation of Koszul duality, likely related to relative, or \textit{curved}, Koszul duality \cite{curved}. 
For theories with a moduli space of vacua, we expect choices of augmentation to form continuous families, such that the resulting Koszul dual algebras should furnish interesting sheaves over moduli space. 

\item Similarly, but less ambitiously, it would be instructive to work through some examples with multiple massive vacua, and explore how the Koszul dual pairs of algebra vary given the explicit choice of augmentation/massive vacuum. 

\item In our examples, we have focused on Koszul dual pairs where one algebra was an ordinary degree 0 algebra and its dual was generated by elements in degree 1. 
This case is especially simple, in part because the extra augmentation constraint is essentially trivial, and it is easy to enumerate all Maurer-Cartan elements.  
It would be insightful to work out more explicit examples of algebras generated in multiple cohomological degrees, and deduce the corresponding space of Maurer--Cartan elements. 
Such examples can arise in theories in higher dimensions with higher form gauge symmetries, and so would be particularly interesting in the studies of higher dimensional TFTs and $\EE_n$-Koszul duality. 
It would also be interesting to understand if studying Koszul duality of these theories using their BV-BRST complexes of local operators can lend insight into the study of the higher group symmetries such theories enjoy \cite{2group, 2group6}.
\end{itemize}

Next, we will describe some versions of Koszul duality for a broader class of algebras than the (homotopy) associative algebras we have been focusing on. 
For some of these, the physical and mathematical generalization is relatively straightforward, while for others the appropriate definitions have not yet been made. 

\begin{itemize}

\item 
In the case of topological field theory, Koszul duality patterns have been observed in the context of Kontsevich formality in the works \cite{Shoikhet, Calaque, Rossi}. 
In \S \ref{s:lines} we alluded to the $A_\infty$-algebra (or $\EE_1$-algebra) structure present for algebras on topological lines (though see \cite{GO}). Also of interest are higher-dimensional topological defects (or, in codimension-1, boundary conditions) \footnote{Note that here we again mean a \textit{fixed} $d$-dimensional submanifold in spacetime along which translations are BRST-trivial. This is to be contrasted with the topological defect lines (TDLs) of, e.g., \cite{Yinetal}.}. For example, topological boundary conditions in 3d TQFTs have been studied in \cite{KS, BM, Kaidietal}. 

Topological QFT admits a functorial description in terms of $\EE_n$-algebras and their factorization homology \cite{AF1,AFT,LurieHA,Claudia}. 
Gauge-invariant couplings of QFTs to $n$-dimensional topological defects can be described using Koszul duality for $\EE_n$-algebras, and the latter is well-understood mathematically \cite{lurie2011formal}.
Further, in \cite{AFkoszul} Ayala and Francis have proved a duality for factorization homology of $\EE_n$-algebras  which simultaneously generalizes Koszul duality and Poincar\'e duality for manifolds. 

It would be interesting to see if Koszul duality affords an alternative characterization of these topological defects/boundaries as the space of $\EE_n$-homomorphisms from a Koszul-dual $\EE_n$-algebra. For instance, the perspective on Koszul duality as a ``Fourier transform" between transverse boundary conditions that we will discuss below may connect to the Lagrangian subgroup condition of \cite{KS}. 
It may also be fruitful to approach these problems using the BV-BFV formalism espoused in \cite{Cattaneo:2020lle} and elsewhere. 

\item Associative algebras and their $\EE_n$ cousins are natural algebras capturing OPEs of local operators in topological quantum field theories (including theories which are topological in the sense of cohomology).  Another concrete but rich algebraic structure capturing OPEs in certain (cohomological) field theories is that of a vertex algebra\footnote{For higher dimensional analogues of Kac-Moody algebras, relevant for capturing symmetries of holomorphically twisted QFTs, see \cite{GWkm}}. In contrast to the $\EE_n$ case, however, Koszul duality for vertex algebras has not been rigorously formulated \footnote{As in the associative algebra case, we are actually working with dg-vertex algebras}. Following the logic in this note, one would expect that a theory with holomorphic dependence on a plane (i.e. such that $\partial_{\bar{z}}$ is BRST-exact) has a Koszul dual vertex algebra corresponding to the universal holomorphic defect. One can try to compute this algebra by imposing BRST invariance of the resulting coupled system. This point of view was adopted to perturbatively compute Koszul dual vertex algebras in \cite{CP}. To formulate this notion mathematically, along the lines of this note, one needs the appropriate generalization of the space of Maurer-Cartan elements in the augmentation kernel. See \cite{Li} for some special cases of the requisite space of solutions.  

\item Throughout this note, we have focused on defects of order-type: that is, defects described by a lower-dimensional QFT with a local Lagrangian, coupled to the bulk QFT by local terms. It is natural to wonder if Koszul duality has some imprint on defects of disorder type, such as those describing nonperturbative objects in the bulk QFT. (Certainly, in systems with S-duality exchanging Wilson and 't Hooft lines, one can expect a connection). 

This question was answered in the special case of 't Hooft lines in 4d Chern-Simons theory in \cite{CGY}. There, the algebra of operators on an order-type line is the Yangian, which is Koszul dual to the bulk algebra \cite{CYangian}; it turns out that the Koszul dual of the operator algebra on a 't Hooft line is the dominant shifted Yangian (a subalgebra of the Yangian), predicated physically on consistency conditions arising when coupling the two types of defects together. 

It would be desirable to have a more general analysis of this situation. In particular, these operators source flux, which is expected to produce a deformation of the standard Koszul dual algebra via anomalous interactions with bulk modes. Rather than ask, then, that bulk/defect coupling is BRST-invariant, one must ask that the coupling has an anomaly that cancels the anomaly arising from the presence of flux escaping to infinity, so that the total system is gauge-invariant. A perturbative version of this analysis was employed in the vertex algebra case in \cite{CP}: there, the vertex algebras arose from the worldvolume theories of holomorphically twisted D-branes, which produced RR flux coupling to closed-string modes in Kodaira-Spencer theory via a local but non-BRST-invariant term. Mathematically, these deformations should connect to curved Koszul duality, which allows for a source term in the Maurer-Cartan equation \cite{curved}, and it would be useful to formulate this notion in generality. 

\item A general way to formulate local operator algebraic data in Euclidean QFT is via factorization algebras. See \cite{CG1, CG2} for an introduction. An analysis using the mathematical machinery of operads tells us that the Koszul dual of a factorization algebra is another factorization algebra, but computing these explicitly in non-twisted settings is a prohibitively difficult task. It would be instructive if a computable example could be found, even in a low-dimensional toy model. Many aspects of our formulation become murky. For example, we have been studying augmented algebras, which are symmetries that preserve a choice of vacuum; one may choose a nice symmetry-preserving asymptotic boundary condition in the UV, but have to contend with spontaneous symmetry breaking at low energies. Nonetheless, we will simply point out encouraging recent mathematical advances in understanding these foundational field theoretic phenomena; see \cite{EG} for a description of SSB in derived geometry. 

\end{itemize}

Finally, we wish to connect the discussion in this note to more general questions of recent physical interest, and highlight some alternative perspectives on Koszul duality that can be found in the literature. We hope that, in doing so, we have conveyed some of the beauty and utility of applying homological algebraic reasoning to various physical problems. Although homological algebra---as well as derived geometry, higher categories, etc.---may appear abstruse, it can serve as an organizational framework and computational toolbox for various physical questions.
\begin{itemize}
\item We explained in \S \ref{s:review} that Koszul dual algebras have the same derived category of dg-modules (roughly, representation categories), although we have not said much about the modules so far\footnote{For the cognoscenti, Koszul duality is a dg version of Morita theory.}. Let us make some physical and mathematical remarks now. One can draw a mathematical analogy between Koszul duality and Fourier transforms. In this analogy, $\cA, \cA^{!}$ correspond to canonically conjugate variables on a phase space like $x, p$, and the transformation that takes an object viewed as an $\cA$-module to its avatar as an $\cA^{!}$-module is analogous to a Fourier transform of tempered distributions. The role of the integral kernel in the Fourier transform is played by the Maurer-Cartan element.  More precise statements about module transformation in the dg-associative case can be found in \cite{GO}, along with a physical realization of this idea in the context of twisted, $\Omega$-deformed M2/M5-brane intersections. A module can be thought of in the topological line context as a defect on which lines can end. The algebra elements can act on this defect by translating them along the line to the endpoint where the local operator sits. A line may also puncture the defect, in which case the defect furnishes an algebra bimodule. One can obtain additional morphisms by studying defects as junctions between two different lines.  Additional structures, such as coproducts, on this category can also be obtained by fusing lines together, fusing defects together, and so on. In a set-up from a twisted version of M-theory, including intersections of line/surface defects arising from twisted M2/M5 branes, these questions were explored in detail \cite{GR}. Studying further examples, and providing a complete mathematical treatment of all these ingredients, is important future work. \\
\item In \S \ref{s:free}, we illustrated another way to obtain Koszul dual algebras from our free theory example: they arose as the algebras of local operators on two distinct boundary conditions upon compactification of our three-dimensional theory to a two dimensional half-space. In the resulting 2d TQFT, the original topological direction along which our putative line defect was supported lies transverse to the resulting boundary. This is, of course, no accident. Koszul dual associative algebras arise as local operator algebras on boundary conditions in 2d TQFTs \cite{DimofteStringMath}. More precisely, the two boundary conditions supporting the Koszul dual algebras are two distinct generating objects in the category of boundary conditions, and are sometimes called ``transverse''. The reason for this terminology is that, roughly speaking, transverse boundary conditions support complementary halves of the degrees of freedom in the 2d theory. This choice of coordinates/field variables for the degrees of freedom corresponds to a choice of phase space polarization spanned by the two ``transverse'' directions. This is natural from the point of view of Koszul duality as a Fourier transform we have sketched above. The boundary condition picture of Koszul duality was explored in 2d compactifications of 3d $\cN =4$ theories in \cite{BDGH}. Koszul duality in the Fukaya-Seidel derived category (i.e. the category of boundary conditions in the A-model) was also studied in \cite{Mina}, with applications to knot homology. 

Higher dimensional BPS boundary conditions may instead support, for instance, vertex algebras of local operators in a suitable twist. It would be very illuminating to explicate Koszul duality for vertex algebras from the boundary condition point of view as well. For example, examples of transverse pairs of BPS-boundary conditions in 3d $\cN=2$ theories were employed to construct duality interfaces in \cite{DGP}, and are also expected to be Koszul dual in the appropriate sense. Further, one should be able to connect a.) the transverse boundary condition picture of Koszul dual vertex algebras and b.) the picture employing gauge-invariant coupling of a bulk QFT to a holomorphic defect, again via a suitable dimensional reduction to a half-space. We leave to future work the task of making all of these connections more precise.\\

\item Costello and Li have used Koszul duality to provide proofs of powerful anomaly cancellation arguments in the topological string, even in situations where one must contend with an infinite number of possible counterterms or, more dramatically, nontrivial gauge anomalies at every order in perturbation theory. Homological algebra provides all-orders result proving that coupled open-closed string theories can nonetheless be quantized uniquely in the topological string. This analysis was performed in the B-model (BCOV theory) \cite{CLBCOV} and in a supersymmetric sector of the type I string \cite{CL2}. Analogous studies are underway for the open-closed A-model on hyperkahler spaces \cite{PS}. \\

\item Many of the ideas and ingredients we have touched on so far seem closely connected to quantization. Geometric quantization involves a choice of polarization on phase space, and one is often interested in the canonical transformations on phase space, while Koszul duality is a distinguished such symmetry (a swap of canonically conjugate variables given a choice of polarization). Furthermore, Koszul duality produces two alternative ways to view the category of boundary conditions in topological string theory, while studying A-branes on certain spaces provides an alternative geometric route to quantization \cite{GukovWitten}; see \cite{GaiottoWitten} for recent work connecting brane quantization and geometric quantization. Indeed, Koszul duality has been shown to be useful in the physical context of ``geometric quantization on the cylinder'', or geometric quantization of a relative bulk/boundary system. This setup has been fleshed out in detail \cite{EY} in the context of the TQFT relevant for Geometric Langlands program, and we will briefly summarize. Consider a TQFT theory on the ``cylinder'' $M \times \mathbb{R} \times \mathbb{R}_{\geq 0}$, with some choice of boundary condition (so, in particular, the direction normal to the boundary is topological). A line operator inserted in the topological direction and supported at a point on $\mathbb{R}_{\geq 0}$ can act on the boundary condition by bringing the point close to 0, wherein it fuses with the original boundary condition to form a new one. One can of course consider the trivial line, whose endomorphisms are simply local operators in the theory, which then act naturally on the category of boundary conditions. 

On the other hand, one can consider the TQFT on $M \times \mathbb{R} \times \mathbb{R}_{\geq 0}$ with a choice of boundary condition, and study the algebra of operators on $M \times U$ where $U$ is a contractible open set of the half-plane. Since we are in a TQFT we can use the state-operator correspondence to map this configuration to a codimension-1 strip with the same boundary condition at both ends; by the usual TQFT voodoo, this produces a Hilbert space of states with the structure of an algebra. This picture, illustrated in \cite{EY}, enables one to think of the endomorphism algebra of local operators on any given boundary condition of the TQFT as the quantum phase space of the bulk/boundary system. The algebra of local operators, viewed as endomorphisms of the trivial line, acts on any such boundary condition. This gives a map of algebras from the algebra of local operators to the endomorphism algebra of any boundary condition in the category. This must be true for any category object/boundary condition, so the algebra enjoys a universal property in the category of boundary conditions.

Both of these description of the category fit in naturally with the physical avatars of Koszul duality we discussed. This category can be thought of as a category of $\cA$-modules, where $\cA$ is the endomorphism algebra of the generating boundary condition. However, it has a (more tractable) Koszul dual description as a category of $\cA^{!}$-modules. The Koszul dual frame was used to nice effect in \cite{EY}.

An immediate and very interesting extension would be extending this setup to theories that are not TQFTs but rather are twisted theories with holomorphic or mixed holomorphic-topological dependence in the directions transverse to $\mathbb{R}$. The mathematical formalism relevant for these explorations is \textit{shifted geometric quantization}, reviewed in \cite{Safronov}. \\

\item At last, we come to the end of our tour through Koszul duality and its myriad guises. 
Naturally, we must conclude with the killer application of Koszul duality in physics: holography. More precisely, we claim that Koszul duality (suitably generalized), governs bulk/boundary algebras of operators in examples of the \textit{twisted holography} program, initiated in \cite{CG}. One may motivate this correspondence somewhat picturesquely using the coupled bulk/defect picture of Koszul duality. One can consider the open/closed string origins of the AdS/CFT correspondence and derive the constraints on this system from imposing BRST-invariance of their couplings, with a suitable deformation to the algebras arising from the D-brane backreaction. It would be interesting to translate the transverse boundary picture of Koszul duality into a twisted holography context as well; it would be fascinating if the transverse boundaries connected, for instance, to the idea of Gopakumar that every closed string theory has two open string duals \cite{Gopakumar}. The protoypical example of this is the FZZT/ZZ branes dual to 2d gravity coupled to Liouville theory/minimal models. This example begs for a twisted holographic interpretation. 

The idea that Koszul duality could be relevant for the holographic correspondence was initiated by Costello and Li in \cite{TwistedSugra}, and elaborated upon by Costello in \cite{CostelloM2} using his twisted M-theory construction \cite{CostelloMtheory}. Koszul duality in a twisted instance of the AdS/CFT correspondence, the duality between IIB string theory on $AdS_3 \times S^3 \times T^4$ and the $Sym^N(T^4)$ CFT, was explicated in \cite{CP}. More precisely, the authors proposed that deformed/curved Koszul duality for vertex algebras enabled one to obtain $N \rightarrow \infty$ OPEs of the boundary CFT from the BV-BRST complex of the closed string field theory on the appropriate spacetime in the bulk. More work needs to be done in order to fully understand and exploit this relationship. 

The utility of Koszul duality and twisted holography is not limited to AdS/CFT: celestial symmetry algebras that have been unearthed in the celestial holography program can be derived simply from Koszul duality considerations in the context of twisted holography on twistor space \cite{CP2}. 
\end{itemize}

The applications of these mathematical ideas to physics, and especially holography, are just beginning to bear fruit. We hope this note has served as a helpful introduction to this vast web of ideas.

\printbibliography


\end{document}